\newcommand{\beq}{\begin{equation}}
\newcommand{\eeq}{\end{equation}}
\newcommand{\ba}{\begin{array}}
\newcommand{\ea}{\end{array}}
\newcommand{\bea}{\begin{eqnarray}}
\newcommand{\eea}{\end{eqnarray}}
\newcommand{\bean}{\begin{eqnarray*}}
\newcommand{\eean}{\end{eqnarray*}}
\newtheorem{theorem}{Theorem}%[section]
\newtheorem{example}[theorem]{Example}
\newtheorem{prop}[theorem]{Proposition}
\newtheorem{lem}[theorem]{Lemma}
\newtheorem{exe}[theorem]{Exercise}
\newtheorem{remark}[theorem]{Remark}
\newenvironment{rem}{\begin{remark} \rm}{\end{remark}}
\newtheorem{proof}{Proof.}
\newenvironment{pf}{\begin{proof} \rm}{\hfill$\square$ \end{proof}}
\newcommand{\mf}[1]{\mathfrak{#1}}
\newcommand{\bb}[1]{\mathbb{#1}}
\DeclareMathOperator{\Ht}{ht}
\DeclareMathOperator{\tr}{Tr}
\DeclareMathOperator{\ad}{ad}
\DeclareMathOperator{\diag}{diag}
\newcommand{\CW}{{\cal W}}
\newcommand{\CS}{{\cal S}}
\newcommand{\CN}{{\cal N}}
\newcommand{\CM}{{\cal M}}
\newcommand{\CQ}{{\cal Q}}
\begin{document}
\title{Poisson pencils: reduction, exactness, and invariants}
 \author{Paolo Lorenzoni$^1$,  Marco Pedroni$^2$, and Andrea Raimondo$^2$}
\affil{
{\small  $^1$Dipartimento di Matematica e Applicazioni\\
Universit\`a di Milano-Bicocca, Via Roberto Cozzi 55, I-20125 Milano, Italy}\\
{\small paolo.lorenzoni@unimib.it}\\
\medskip
{\small $^2$Dipartimento di Ingegneria Gestionale, dell'Informazione e della Produzione}\\
{\small Universit\`a di Bergamo, Viale Marconi 5, I-24044 Dalmine (BG), Italy}\\
{\small marco.pedroni@unibg.it, andrea.raimondo@unibg.it}
}
\date{}

\maketitle

\baselineskip=0,6cm

\begin{abstract}
\par\smallskip\noindent
We study the invariants (in particular, the central invariants) of suitable Poisson pencils from the point of view of the theory of bi-Hamiltonian reduction, paying a particular attention to the case where the Poisson pencil is exact. We show that the exactness is preserved by the reduction. 
In the Drinfeld-Sokolov case, the same is true for the characteristic polynomial of the pencil, which plays a crucial role in the definition of the central invariants. We also discuss the bi-Hamiltonian structures of a generalized Drinfeld-Sokolov hierarchy and of the Camassa-Holm equation.
\par\smallskip\noindent
{\it Keywords:\/} Drinfeld-Sokolov reduction; Poisson pencils of hydrodynamic type; central invariants; integrable PDEs; exact bi-Hamiltonian manifolds.
\end{abstract}

\section{Introduction}
Deformations of Poisson pencils $P_2-\lambda P_1$ of  hydrodynamic type
\beq\label{PPHT}
P^{ij}_a=g^{ij}_{a}({\bf w})\partial_x+\Gamma^{ij}_{k,a}({\bf w})w^k_x 
,\quad i,j=1,...,n,\quad\,a=1,2
\eeq
(see next section for more detailed definitions),
play a crucial role in the study of integrable hierarchies of PDEs \cite{DZ02,L2002,DLZ06}. In the semisimple case, this important class of Poisson pencils
 is parametrized by $n^2$ arbitrary functions of one variable. Part of these functions (namely, $n^2-n$) label semisimple Poisson pencils of hydrodynamic
  type \cite{F01} (see also \cite{M2002}). The remaining  $n$ functions, called \emph{central invariants},
   label Miura equivalent deformations of the same Poisson pencil of hydrodynamic type.   
   
   The most important examples are those for which the dispersionless limit is an exact Poisson pencil and the central invariants are constant \cite{DLZ17}.
    They include the Poisson pencils of the Drinfeld-Sokolov (DS) hierarchies \cite{DS85} corresponding to (the loop algebra of) a simple finite-dimensional Lie algebra 
$\mathfrak g$. In this case, it has been shown in \cite{DLZ08}
  that the dispersionless limit coincides with the Poisson pencil of hydrodynamic type associated with the flat pencil of metrics defined on the orbit space
   of the corresponding Weyl group (obtained using the Dubrovin-Saito construction \cite{D99,S93,SYS} for Coxeter groups), the central invariants are constant and their values coincide with the square of the lengths of the roots of $\mathfrak g$. For instance, in the simply laced case all the roots have equal lengths and, consequently, the associated integrable
 hierarchies are of topological type \cite{DZ02}.
  
  The aim of this paper is twofold. On one hand, we want to study the central invariants from the point of view of  the theory of bi-Hamiltonian reduction (that includes the DS  case). We will focus on the class of Poisson pencils which, after reduction, get the form of a deformation of a Poisson pencil of hydrodynamic type.
   In particular, in the DS case we will show that it is possible to define a set of invariants of the unreduced pencil that, once restricted on the  
   reduced manifold, define a set of functions which are equivalent to Dubrovin-Liu-Zhang central invariants of the reduced pencil. 
    
   On the other hand, we will focus on the case of exact Poisson pencils,  showing that the reduction process preserves the exactness of the pencil.
    Combining this result with the main result  of \cite{FL12}, relating the constancy of the central invariants with the exactness of the Poisson pencil,
     we have that the central invariants for the DS Poisson pencils are constant, in agreement with the Dubrovin-Liu-Zhang result.  

The paper is organized as follows. In Section 2 we recall various definitions concerning Poisson pencils of hydrodynamic type and their deformations. 
Section 3 is devoted to the bi-Hamiltonian reduction. In Section 4 we show that the exactness of a pencil is preserved by reduction. 
In Section 5 we collect some basic facts about the DS reduction, while in Section 6 we show that in this case the characteristic polynomial of the reduced pencil coincides with the one of the unreduced pencil. In the final section
 we consider two examples going beyond the DS framework.

\section{Invariants of Poisson pencils}\label{sec:central invariants}
We briefly summarize some basic notions about pencils of $N$-component infinite-dimensional Poisson brackets.
Consider \cite[Section $2$]{DZ02} (see also \cite{LZ04}) the following class of local Poisson brackets on the manifold of ${\mathbb R}^N$-valued functions ${\bf w}(x)=(w^1(x),\dots,w^N(x))$ over the unit circle $S^1$,
\begin{equation}\label{poisson}
\{w^{i}(x),w^{j}(y)\}=\sum_{k\ge -1}\epsilon^k\sum_{l=0}^{k+1}A^{ij}_{k,l}({\bf w},{\bf w}_x,\dots,{\bf w}_{(l)})\delta^{(k-l+1)}(x-y),
\end{equation}
where $A^{ij}_{k,l}$ are differential polynomials of degree $l$ (i.e., they are polynomials in the derivatives, whose coefficients are functions of ${\bf w}$), and $\deg f({\bf w})=0$,  $\deg ({\bf w}_{(l)})=l$. One can associate to \eqref{poisson} the differential operator
$$\Pi^{ij}=\sum_{k\ge -1}\epsilon^k\sum_{l=0}^{k+1}A^{ij}_{k,l}({\bf w},{\bf w}_x,\dots,{\bf w}_{(l)})\partial_x^{k-l+1},
$$
uniquely characterized by the relation $\{w^{i}(x),w^{j}(y)\}=\Pi^{ij}\delta(x-y)$. The most general group of transformations preserving the form of the bracket \eqref{poisson} is the group of Miura transformations.
 An element of this group is a transformation ${\bf w}\mapsto \tilde{\bf w}=(\tilde{w}^1,\dots,\tilde{w}^N)$, where
\begin{equation}\label{MT}
\tilde{w}^i= F_0^i({\bf w})+\sum_{k\ge1}\epsilon^k F^i_k({\bf w},{\bf w}_x,\dots,{\bf w}_{(k)}),\qquad{\rm det}\left(\frac{\partial F_0^i}{\partial w^j}\right)\ne 0,\, {\rm deg}F^i_k=k.
\end{equation}  
The Miura group can be described \cite{LZ06} as the semi-direct product of the subgroup of diffeomorphisms $Diff(\bb{R}^N)$ and the subgroup of Miura transformations starting from the identity:
\begin{equation}\label{MTid}
\tilde{w}^i= w^i+\sum_{k\ge1}\epsilon^k F^i_k({\bf w},{\bf w}_x,\dots,{\bf w}_{(k)}),\qquad\, {\rm deg}F^i_k=k.
\end{equation}  
The latter subgroup plays an important role in the classification theory of Poisson brackets: to prove that two brackets of type \eqref{poisson} are equivalent under a Miura transformation of type \eqref{MT} one first looks for a (local, in general) diffeomorphism mapping the leading term ($k=-1$) of the first bracket to the second, and then applies a transformation of type \eqref{MTid} --- which leaves the leading term invariant --- to obtain the second bracket.

After a transformation of type \eqref{MT}, the form of the Poisson bracket will be given by the operator
\begin{equation}\label{operatortrasformation}
\widetilde{\Pi}^{ij}=(L^\ast)^i_k\Pi^{ks}L^j_s,
\end{equation}
where 
$L$ and its adjoint $L^\ast$ are defined as
$$L^i_k=\sum_{s}(-\partial_x)^s\frac{\partial \tilde{w}^i}{\partial w^{k}_{(s)}},\qquad (L^\ast)^i_k=\sum_{s}\frac{\partial \tilde{w}^i}{\partial w^{k}_{(s)}}\partial_x^s.$$
A pair of Poisson brackets of the form \eqref{poisson},
\begin{equation}\label{bi-poisson}
\{w^{i}(x),w^{j}(x)\}_{a}=\sum_{k\ge -1}\epsilon^k\sum_{l=0}^{k+1}A^{ij}_{k,l;a}({\bf w},{\bf w}_x,\dots,{\bf w}_{(l)})\delta^{(k-l+1)}(x-y),\qquad a=1,2,
\end{equation}
is said to be \emph{compatible} if the pencil $\{\cdot,\cdot\}_{(\lambda)}=\{\cdot,\cdot\}_2-\lambda \{\cdot,\cdot\}_1$ is a Poisson bracket for any value of $\lambda$. In this case the bracket $\{\cdot,\cdot\}_{(\lambda)}$ is said to be a Poisson pencil. Given a pair of Poisson brackets \eqref{bi-poisson} one can consider the action of the Miura group \eqref{MT} on the pencil and define a set of invariants by means of the following recipe \cite{DLZ08,DVLS}. Introducing the differential operators
\begin{equation}\label{bipoisson-operator}
\Pi^{ij}_a
=\sum_{k\ge -1}\epsilon^k\sum_{l=0}^{k+1}A^{ij}_{k,l,a}({\bf w},{\bf w}_x,\dots,{\bf w}_{(l)})\partial_x^{k-l+1},\qquad a=1,2,
\end{equation}
then one defines the following power series\footnote{Note that our definition of \eqref{poisson-symbol} differs from the one given in \cite{DLZ08} by a  multiplicative factor $p$.} in the parameter $p$:
\begin{equation}\label{poisson-symbol}
\pi^{ij}_a(p;{\bf w})=\sum_{k\ge -1}A^{ij}_{k,0,a}({\bf w})p^{k+1},\qquad a=1,2.
\end{equation}
Note that \eqref{poisson-symbol} is strictly related (but it is not equal) to the symbol of \eqref{bipoisson-operator}, and that the coefficients of \eqref{poisson-symbol} do not depend on the derivatives of ${\bf w}$.   The %transformation 
rule \eqref{operatortrasformation} induces the following transformation on the pencil $\pi_{\lambda}^{ij}=\pi^{ij}_{2}-\lambda\pi^{ij}_{1}$ under Miura transformations \eqref{MT}:
  \begin{equation}\label{TRPP}
  \tilde\pi^{ij}_{\lambda}=l^i_h(p)\pi_{\lambda}^{hk}l^j_k(-p),
  \end{equation}
where 
$$l^i_j(p)=\sum_{k}\frac{\partial F^i_k}{\partial w^j_{(k)}}p^k.$$  
Since by definition \eqref{MT} we have that $\deg F^i_k=k$, then the quantities $l^i_j(p)$ may depend on ${\bf w}$ but not on the derivatives. 
Consider now the \emph{characteristic polynomial} of the Poisson pencil,
\begin{equation}\label{charpoly}
{\mathcal R}(p,\lambda;{\bf w})=\det\left(\pi^{ij}_{2}-\lambda\pi^{ij}_{1}\right)=
\det\left(\sum_{k\ge -1}\left(A^{ij}_{k,0;2}({\bf w})-\lambda A^{ij}_{k,0;1}({\bf w})\right)p^{k+1}\right),
\end{equation}
which is a polynomial, of degree say $M$, with respect to $\lambda$. In general, $M$ does not need to be equal to the dimension $N$. Due to the above construction, the functions $\lambda^i({\bf w},p)$, $i=1,\dots,M$, which are defined to be the $\lambda-$roots of the equation ${\mathcal R}(p,\lambda;{\bf w})=0$, are invariant under Miura transformations of type \eqref{MTid}, while they behave as scalars ($0-$tensors) with respect to the whole Miura group \eqref{MT}. 

\begin{example}
\label{exa:kdv-sec2}
Consider the Poisson pencil 
\begin{equation}
\label{sl2-sec2}
\begin{pmatrix} 0 & -2 \epsilon^{-1}(w^1-\lambda) & \epsilon^{-1}w^2-\partial_x\\ 2\epsilon^{-1}(w^1-\lambda) & -2\partial_x & -2\epsilon^{-1}w^3 
\\ -\epsilon^{-1}w^2-\partial_x & 2\epsilon^{-1}w^3 & 0\end{pmatrix}
\end{equation}
in the variables $(w^1,w^2,w^3)$. (We will see in Section \ref{sec:DS-structures} that it is the particular case of (\ref{DS-pb-delta}) corresponding to 
${\mathfrak{sl}}(2)$). The power series associated to this pencil is
\begin{equation}
\label{sl2symbol-sec2}
\begin{pmatrix} 0 & -2(w^1-\lambda) & w^2-p\\ 2(w^1-\lambda) & -2p & -2w^3 \\ -w^2-p & 2w^3 & 0\end{pmatrix},
\end{equation}
so that the characteristic polynomial (\ref{charpoly}), restricted at the points with $w^3=1$, is 
\begin{equation}
\label{charpoly-sl2-sec2}
{\mathcal R}
(p,\lambda;w^1,w^2)=2p^3-8\left(w^1+\frac14\left(w^2\right)^2-\lambda\right)p. 
\end{equation}
We will see in Example \ref{exa:kdv-ci} that it projects (up to a multiplicative constant) to the characteristic polynomial (\ref{kdv-charpoly}) of the KdV Poisson pencil.
\end{example}

In the remaining part of this section we will recall some important facts about Poisson pencils of the form \eqref{poisson} admitting semisimple dispersionless limit. Some preliminary results of \cite{DVLS} suggest that part of this theory might be generalized in the non-semisimple setting.

\paragraph{Dispersionless limit.} We now consider a special class of Poisson brackets \eqref{poisson}, namely 
the class
\begin{equation}\label{poissondisp}
\{w^{i}(x),w^{j}(y)\}=\sum_{k\ge 0}\epsilon^k\sum_{l=0}^{k+1}A^{ij}_{k,l}({\bf w},{\bf w}_x,\dots,{\bf w}_{(l)})\delta^{(k-l+1)}(x-y),
\end{equation}
where for consistency with the rest of the paper the number of components is now denoted by $n$ rather than $N$, so that ${\bf w}=(w^1,\dots,w^n)$. The class of brackets \eqref{poissondisp} admits the limit as $\epsilon\to 0$, known as \emph{dispersionless limit}, and  the leading term of the bracket,
\begin{equation}\label{pbht}
A^{ij}_{0,0}({\bf w})\delta'(x-y)+A^{ij}_{0,1}({\bf w},{\bf w}_x)\delta(x-y)=g^{ij}({\bf w})\delta'(x-y)+\Gamma^{ij}_k({\bf w})w^k_x\delta(x-y),
\end{equation}
is called {\it  Poisson bracket of hydrodynamic type}. Thus the dispersionless limit of a bracket of type \eqref{poissondisp} is a bracket of hydrodynamic type \eqref{pbht}. In the case where the matrix $g^{ij}$ is invertible, it defines a contravariant (pseudo)metric on 
${\mathbb R}^n$, which is flat and has $\Gamma^{ij}_k
$ as contravariant Christoffel symbols \cite{DN84}.

For Poisson brackets of the form \eqref{poissondisp} an analogue of the classical Darboux theorem holds true, as these brackets can be reduced to the constant form $\eta^{ij}\delta'(x-y)$
by means of Miura transformations \eqref{MT}. For Poisson brackets of hydrodynamic type, the Darboux coordinates are flat coordinates of the metric $g$. In the case of general Poisson brackets 
 \eqref{poissondisp}, in order to reduce the bracket to constant form one needs to reduce the bracket to its dispersionless limit. The existence of this reducing transformation --- which is of type \eqref{MTid} ---  was proved in \cite{G,DMS,DZ02}. 

In analogy to the general case, we also  consider compatible Poisson brackets of the form
\begin{equation}\label{bi-poissondisp}
\{w^{i}(x),w^{j}(y)\}_a=\sum_{k\ge 0}\epsilon^k\sum_{l=0}^{k+1}A^{ij}_{k,l,a}({\bf w},{\bf w}_x,\dots,{\bf w}_{(l)})\delta^{(k-l+1)}(x-y),\qquad a=1,2,
\end{equation}
with
\begin{equation}\label{bi-pbht}
A^{ij}_{0,0,a}({\bf w})\delta'(x-y)+A^{ij}_{0,1,a}({\bf w},{\bf w}_x)\delta(x-y)=g^{ij}_a({\bf w})\delta'(x-y)+\Gamma^{ij}_{k,a}({\bf w}){w^k}_x\delta(x-y),
\end{equation}
and the corresponding pencil $\{\cdot,\cdot\}_{(\lambda)}=\{\cdot,\cdot\}_2-\lambda \{\cdot,\cdot\}_1$. The dispersionless limit of this pencil is known as \emph{Poisson pencil of hydrodynamic type}.  The compatibility of the Poisson brackets implies that the pencil of contravariant metrics 
\begin{equation}\label{pencilmetric}
g^{ij}_{(\lambda)}=g^{ij}_{2}-\lambda g^{ij}_{1}
\end{equation}
is flat for any $\lambda$ and that the contravariant Christoffel symbols of $g^{ij}_{(\lambda)}$ are given by the pencil $\Gamma^{ij}_{k;2}-\lambda \Gamma^{ij}_{k;1}$
of the contravariant Christoffel symbols.  A pencil of contravariant flat metrics satisfying these conditions is called a \emph{flat pencil of metrics} \cite{D98}.  In general, it is not possible to reduce a Poisson pencil \eqref{bi-poissondisp} to its dispersionless limit by means of Miura transformations. If this happens, the pencil is said to be \emph{trivial}. 

\begin{example}
\label{exa:scalarcase}
Let us consider the second order deformations of the scalar Poisson pencil of hydrodynamic type 
$$2(u-\lambda)\delta'(x-y)+u_x\delta(x-y).$$ 
Using Miura transformations they can be reduced to the following form, 
\begin{equation}\label{scalar} 
2(u-\lambda)\delta'(x-y)+u_x\delta(x-y)+\epsilon^2\left(2c(u)\delta'''(x-y)+3c_x\delta''(x-y)+c_{xx}\delta'(x-y)\right)+{O}(\epsilon^4),
\end{equation}
where $c(u)$ is an arbitrary function. It turns out that any non vanishing function $c(u)$ defines a non trivial deformation \cite{L2002}.
\end{example}

\paragraph{Semisimplicity and central invariants.}
A flat pencil of metrics \eqref{pencilmetric} and the corresponding Poisson pencil of hydrodynamic type are said to be \emph{semisimple}  
if the $\lambda$-roots $u^1({\bf w}),\dots,u^n({\bf w})$ of 
$$\det\left(g^{ij}_{2}({\bf w})-\lambda g^{ij}_{1}({\bf w})\right)$$ 
are pairwise distinct  and nonconstant.  In this case, they 
form a set of coordinates ${\bf u}=(u^1,\dots,u^n)$, known as {\it canonical coordinates}, in which the two metrics are diagonal:
\begin{equation}\label{diagometrics}
g^{ij}_{1}({\bf u})=f^i({\bf u})\delta_{ij},\qquad g^{ij}_{2}({\bf u})=u^i f^i({\bf u})\delta_{ij},
\end{equation}
for some functions $f^i({\bf u})$. Note that by construction the functions $f^i({\bf u})$ are invariant under Miura transformations of type \eqref{MTid}. In the semisimple $n$-component  case the triviality of the pencil (\ref{bi-poissondisp}) is controlled by $n$ functions of a single variable, called \emph{central invariants}  and defined in the following way \cite{DLZ08}. Due to \eqref{poissondisp} and \eqref{pbht}, the determinant ${\mathcal R}(p,\lambda;{\bf w})$ defined in \eqref{charpoly} reads
\begin{align*}
{\mathcal R}(p,\lambda;{\bf w})&=\det\left(\pi^{ij}_{2}-\lambda\pi^{ij}_{1}\right)=
\det\left(\sum_{k\ge 0}\left(A^{ij}_{k,0;2}({\bf w})-\lambda A^{ij}_{k,0;1}({\bf w})\right)p^{k+1}\right)\\
&=p^n\det\left(g^{ij}_{2}({\bf w})-\lambda g^{ij}_{1}({\bf w})\right)+O(p^{n+1}).
\end{align*}
As in the general case, the roots $\lambda=\lambda^i(p;{\bf w})$ of the equation ${\mathcal R}(p,\lambda;{\bf w})=0$ are invariant with respect to Miura transformations of type \eqref{MTid}. Under the semisimplicity assumption, it can be shown \cite{DLZ08}  that the formal power series obtained expanding $\lambda^i(p;{\bf w})$ at $p=0$ contain only even powers of $p$ (this is not true anymore in the non-semisimple case \cite{DVLS}), and that the leading terms $u^i$ of the above series are the canonical coordinates:
\begin{equation}
\label{lambdaexpa}
\lambda^i(p;{\bf w})=u^i({\bf w})+\lambda^i_2({\bf w})p^2+{O}(p^4),\qquad i=1,...,n.
\end{equation}
All coefficients appearing in the series above are invariant under Miura maps of type \eqref{MTid}. Recall that the functions $f^i$, $i=1,\dots,n$, appearing in \eqref{diagometrics} are also unaffected by Miura transformations. The central invariants of the Poisson pencil are defined as \cite{DLZ06,DLZ08}
\begin{equation}
\label{central-inv}
c_i=\left.\frac{\lambda^i_2({\bf w})}{3f^i({\bf w})}\right|_{{\bf w}={\bf w}({\bf u})},
\end{equation}
and are thus an equivalent set of invariants. Once written in terms of the canonical coordinates, the function $c_i$ turns out to depend only on the coordinate $u^i$. Trivial Poisson pencils
 are characterized by the vanishing of all central invariants. More generally, it turns out that in the semisimple case two Poisson pencils 
  are related by a Miura transformation if and only if they have the same dispersionless part (in canonical coordinates) and the same central invariants \cite{DLZ06} (see also \cite{CKS} for an alternative proof). 
  
\begin{example}
\label{exa:kdv-ci}
Let us consider the well known Poisson pencil of the KdV hierarchy:
\begin{equation}
\label{kdv-pb-red}
\{u(x),u(y)\}_1=-2\delta'(x-y),\qquad \{u(x),u(y)\}_2=-u_x\delta(x-y)-2u\delta'(x-y)+\frac12\epsilon^2\delta'''(x-y).
\end{equation}
We have that 
\begin{equation}
\label{kdv-charpoly}
{\mathcal R}(p,\lambda;u)=\pi^{11}_{2}-\lambda\pi^{11}_{1}=-2up+\frac12 p^3+2\lambda p,
\end{equation}
so that (\ref{lambdaexpa}) takes the form 
$\lambda=u-\frac14 p^2$. Since $f^1(u)=-2$ and $\lambda^i_2(u)=-\frac14$, the central invariant is given 
by $c_1(u)=\frac1{24}$. In the scalar case (see Example \ref{exa:scalarcase}), up to a constant factor, the central invariant coincides with the function $c(u)$ 
appearing in 
formula \eqref{scalar}.
\end{example}

Notice that (\ref{kdv-charpoly}) is $1/4$ of the characteristic polynomial (\ref{charpoly-sl2-sec2}) obtained in Example \ref{exa:kdv-sec2}, if 
$u=w^1+\frac14\left(w^2\right)^2$. As we will show in Section \ref{sec:centralinvariants} (in the general context of the Drinfeld-Sokolov reduction), the reason is that the Poisson pencil in Example \ref{exa:kdv-ci} is the reduction of the one in Example \ref{exa:kdv-sec2}. 

\section{Some facts about bi-Hamiltonian reduction}
\label{sec:bi-Ham-reduction}

In this section we recall a general reduction theorem for bi-Hamiltonian manifolds (see \cite{CP92,pondi,MarleNunes} for details and proofs), and we prove a crucial result in order to show  the equality between the characteristic polynomials (\ref{charpoly}) of the reduced and unreduced bi-Hamiltonian structures of the form (\ref{bi-poisson}). 

Let $(\CM,\{\cdot,\cdot\}_1,\{\cdot,\cdot\}_2)$ 
be a bi-Hamiltonian manifold. The first step is to consider the (generalized) integrable distribution $D=P_2(\mbox{Ker}\,P_1)$, where $P_a$ is the Poisson tensor associated with $\{\cdot,\cdot\}_a$ by means of $\{F,G\}_a=\langle dG,P_a dF\rangle$. Then we choose a symplectic leaf $\CS$ of $P_1$ and we introduce the distribution on $\CS$ given by $E=D\cap T\CS$. If the quotient space
$\CN=\CS/E$ (whose points are the integral leaves of the distribution $E$) is regular, then it is a bi-Hamiltonian manifold. We call $(P'_1,P'_2)$ the reduced Poisson pair, and $(\{\cdot,\cdot\}'_1,\{\cdot,\cdot\}'_2)$ the corresponding Poisson brackets. They are given by
\begin{equation}
\label{red-pb}
\{f,g\}'_a(\pi({\bf w}))=\{F,G\}_a({\bf w}),\qquad a=1,2,\qquad {\bf w}\in\CS,
\end{equation}
where $\pi:\CS\to\CN$ is the projection on the quotient manifold and $F,G$ are extensions of $f\circ\pi,g\circ\pi$ on $\CM$ such that $(dF)_{\bf w},(dG)_{\bf w}$ vanish on the tangent vectors in $D_{\bf w}$ for all ${\bf w}\in\CS$. 

If an explicit description of the quotient manifold is not available, the following technique can be very useful. Suppose $\CQ$ to be a submanifold of $\CS$ which is transversal to the distribution $E$, in the sense that
\begin{equation}
\label{split}
T_{\bf w}\CQ\oplus E_{\bf w}=T_{\bf w}\CS\qquad\mbox{for all ${\bf w}\in\CQ$} .
\end{equation}
Then $\CQ$ also inherits a bi-Hamiltonian structure from $\CM$. The reduced Poisson brackets on $\CQ$ are given by
\begin{equation}
\label{red-pb-Q}
\{f,g\}'_a({\bf w})=\{F,G\}_a({\bf w}),\qquad a=1,2,\qquad {\bf w}\in\CQ,
\end{equation}
where $F,G$ are extensions of $f,g$ on $\CM$ such that $(dF)_{\bf w},(dG)_{\bf w}$ vanish on the tangent vectors in $D_{\bf w}$ for all ${\bf w}\in\CQ$. If the quotient $\CN$ is a manifold, then there is a local diffeomorphism from $\CQ$ to (an open subset of) $\CN$, connecting the Poisson pairs of the two manifolds. Notice however that we can define a reduced Poisson pair directly on $\CQ$, even in the case where the quotient $\CN$ is not a manifold. 

In terms of the pencil of Poisson tensors $P_{(\lambda)}=P_2-\lambda P_1$, the construction of the reduced Poisson structure on $\CQ$ goes as follows. 
Given ${\bf w}\in\CQ$ and ${\bf v}\in T_{\bf w}^*\CQ$, we look for an extension $\widehat{\bf v}\in T_{\bf w}^*\CM$ of ${\bf v}$ such that 
$\left(P_{(\lambda)}\right)_{\bf w}\widehat{\bf v}\in T_{\bf w}\CQ$. The existence of such a $\widehat{\bf v}$ has been shown in \cite{CP92}. The proof of its uniqueness can be found in \cite{CFMP98}, under the assumption that $\ker P_1\cap\ker P_2$ is trivial at the points of $\CQ$, which is true if there is no common Casimir (this situation is sometimes referred to as the non-resonant case). Independently of the uniqueness of $\widehat{\bf v}$, the reduced Poisson pencil is given by $\left(P'_{(\lambda)}\right)_{\bf w}{\bf v}=\left(P_{(\lambda)}\right)_{\bf w}\widehat{\bf v}$.

Now let us introduce coordinates on $\CM$ adapted to the transversal 
submanifold $\CQ$. More precisely, the first part of the coordinates can be seen as coordinates on $\CQ$, which is found by setting to zero the second part of them. Correspondingly, the matrix representing $P_{(\lambda)}$ can be decomposed as
$$
\begin{pmatrix}
A_{(\lambda)} & B_{(\lambda)}\\ C_{(\lambda)} & D_{(\lambda)}
\end{pmatrix}.
$$
For simplicity, we will sometimes use 
the same notations for geometric objects and matrices/vectors representing them in the chosen coordinate system. 
\begin{prop} 
\label{prop:schur} 
Suppose that $(\ker P_1)_{\bf w}\cap (\ker P_2)_{\bf w}=\{0\}$ for all ${\bf w}\in\CQ$. Then  
\begin{enumerate}
\item[a)] The matrix $D_{(\lambda)}$ is invertible.
\item[b)] The matrix representing the reduced Poisson pencil $P'_{(\lambda)}$ is given by $A_{(\lambda)}-B_{(\lambda)}D_{(\lambda)}^{-1}C_{(\lambda)}$. 
\item[c)] The identity
\begin{equation}
\label{decomp-poisson}
P_{(\lambda)}=\begin{pmatrix}  P'_{(\lambda)} & B_{(\lambda)}\\ 0 & D_{(\lambda)}\end{pmatrix}
\begin{pmatrix} \mbox{Id} & 0\\ D_{(\lambda)}^{-1}C_{(\lambda)} & \mbox{Id}
\end{pmatrix}
\end{equation}
holds true, where $\mbox{Id}$ is the identity matrix of the appropriate order.
\end{enumerate}
\end{prop}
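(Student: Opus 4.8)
The plan is to establish the three parts in a logically dependent order: first prove invertibility of $D_{(\lambda)}$ (part a), then use it to produce the block factorization (part c), and finally read off the reduced pencil as the Schur complement (part b). The geometric meaning of the splitting \eqref{split} must be translated into the block structure: in the adapted coordinates, $\CQ$ is cut out by setting the ``second'' coordinates to zero, so $T_{\bf w}\CQ$ is the span of the first coordinate directions and the complementary directions span a space transversal to $T_{\bf w}\CQ$ inside $T_{\bf w}\CS$ together with the normal directions to $\CS$. I would make precise that the lower block $(C_{(\lambda)},D_{(\lambda)})$ governs the components of $P_{(\lambda)}\widehat{\bf v}$ along the transversal directions, which the reduction procedure forces to vanish.

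For part a), the key step is to show $D_{(\lambda)}$ is invertible as a matrix with entries that are polynomials in $\lambda$ (or, in the $\epsilon$-graded setting, Laurent polynomials), using the hypothesis $(\ker P_1)_{\bf w}\cap(\ker P_2)_{\bf w}=\{0\}$. The natural route is to recall from the cited references \cite{CP92,CFMP98} that, under this non-resonance assumption, the extension $\widehat{\bf v}$ of any covector ${\bf v}\in T_{\bf w}^*\CQ$ satisfying $\left(P_{(\lambda)}\right)_{\bf w}\widehat{\bf v}\in T_{\bf w}\CQ$ exists and is unique. Existence and uniqueness of $\widehat{\bf v}$ for arbitrary right-hand side is exactly the statement that the linear system determining the transversal components is uniquely solvable, and that system is precisely the one whose coefficient matrix is $D_{(\lambda)}$; hence $D_{(\lambda)}$ must be invertible. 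I would write $\widehat{\bf v}=({\bf v},{\bf z})$ in block form and impose that the lower block of $\left(P_{(\lambda)}\right)_{\bf w}\widehat{\bf v}$ vanishes, giving $C_{(\lambda)}{\bf v}+D_{(\lambda)}{\bf z}=0$; unique solvability of this for ${\bf z}$ in terms of ${\bf v}$ is equivalent to invertibility of $D_{(\lambda)}$.

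Parts b) and c) are then essentially algebraic and should follow without obstacle. Once $D_{(\lambda)}$ is invertible, the factorization \eqref{decomp-poisson} is verified by direct block multiplication: the product on the right-hand side equals
$$
\begin{pmatrix} P'_{(\lambda)}+B_{(\lambda)}D_{(\lambda)}^{-1}C_{(\lambda)} & B_{(\lambda)}\\ C_{(\lambda)} & D_{(\lambda)}\end{pmatrix},
$$
so matching with the block form of $P_{(\lambda)}$ forces $P'_{(\lambda)}=A_{(\lambda)}-B_{(\lambda)}D_{(\lambda)}^{-1}C_{(\lambda)}$, which is part b). What remains, and this is the one genuine point of content, is to check that this Schur complement indeed represents the reduced pencil defined by $\left(P'_{(\lambda)}\right)_{\bf w}{\bf v}=\left(P_{(\lambda)}\right)_{\bf w}\widehat{\bf v}$. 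Substituting ${\bf z}=-D_{(\lambda)}^{-1}C_{(\lambda)}{\bf v}$ into the upper block of $\left(P_{(\lambda)}\right)_{\bf w}\widehat{\bf v}$ gives $A_{(\lambda)}{\bf v}+B_{(\lambda)}{\bf z}=\left(A_{(\lambda)}-B_{(\lambda)}D_{(\lambda)}^{-1}C_{(\lambda)}\right){\bf v}$, which is exactly the Schur complement applied to ${\bf v}$.

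The main obstacle is the bookkeeping in identifying the reduction constraint with the vanishing of the lower block. Concretely, one must be careful that the condition ``$\left(P_{(\lambda)}\right)_{\bf w}\widehat{\bf v}\in T_{\bf w}\CQ$'' is the correct constraint producing the system $C_{(\lambda)}{\bf v}+D_{(\lambda)}{\bf z}=0$, and that the allowed extensions $\widehat{\bf v}$ are precisely those whose restriction to $T_{\bf w}\CQ$ equals ${\bf v}$. This requires verifying that the annihilator conditions on $dF,dG$ appearing in \eqref{red-pb-Q} match the block decomposition induced by the adapted coordinates; I expect this to be routine once the coordinate conventions are fixed, but it is where sign and indexing errors would most naturally creep in.
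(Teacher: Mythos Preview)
Your proposal is correct and follows essentially the same approach as the paper: you write the extension $\widehat{\bf v}=({\bf v},{\bf z})$ in block form, use the uniqueness result from \cite{CFMP98} to deduce invertibility of $D_{(\lambda)}$ from unique solvability of $C_{(\lambda)}{\bf v}+D_{(\lambda)}{\bf z}=0$, and then obtain the Schur complement formula and the factorization by substitution and direct block multiplication. The only difference is cosmetic ordering --- the paper derives b) directly from the block equations in a) and then notes c) follows, whereas you verify c) by multiplication and extract b) from it --- but the computations are identical.
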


\begin{pf}
a) As we wrote above, it was shown in \cite{CFMP98} that there is a unique extension $\widehat{\bf v}\in T_{\bf w}^*\CM$ of a given 
${\bf v}\in T_{\bf w}^*\CQ$ such that $\left(P_{(\lambda)}\right)_{\bf w}\widehat{\bf v}\in T_{\bf w}\CQ$, under the assumption that 
$(\ker P_1)_{\bf w}\cap (\ker P_2)_{\bf w}=\{0\}$. In \cite{Dinar} it is shown that the uniqueness of the extension is equivalent to the invertibility of $D_{(\lambda)}$. For the ease of the reader, we report here the proof.\\
Given ${\bf v}\in T_{\bf w}^*\CQ$, let $v$ be its components and $\left(\begin{smallmatrix} v\\ V\end{smallmatrix}\right)$ 
be the components of an extension $\widehat{\bf v}\in T_{\bf w}^*\CM$ such that $\left(P_{(\lambda)}\right)_{\bf w}\widehat{\bf v}\in T_{\bf w}\CQ$. Then
$$
\begin{pmatrix} P'_{(\lambda)}{v}\\ 0\end{pmatrix}=P_{(\lambda)}\begin{pmatrix} v\\ V\end{pmatrix}=
\begin{pmatrix} A_{(\lambda)} & B_{(\lambda)}\\ C_{(\lambda)} & D_{(\lambda)}\end{pmatrix}
\begin{pmatrix} v\\ V\end{pmatrix}=
\begin{pmatrix} A_{(\lambda)}v+B_{(\lambda)}V\\ C_{(\lambda)}v+D_{(\lambda)}V\end{pmatrix},
$$
so that $P'_{(\lambda)}v=A_{(\lambda)}v+B_{(\lambda)}V$ and $0=C_{(\lambda)}v+D_{(\lambda)}V$. This shows that $V$ is unique 
if and only if $D_{(\lambda)}$ is invertible. 
\par\smallskip
b) follows from the previous equations.
\par\smallskip
c) follows from b).
\end{pf}
\begin{rem}
As we wrote in the previous proof, in \cite{Dinar} it was shown that the uniqueness of the extension is equivalent to the invertibility of $D_{(\lambda)}$. Moreover, in the same paper it was proved that these conditions are equivalent to item b), and that its meaning is that the bi-Hamiltonian reduction amounts to a Dirac reduction. 
In the particular case of the Drinfeld-Sokolov reduction, this is related to the paper \cite{BFOFW90}, where the DS reduction is shown to be a Dirac reduction. For the purposes of our paper, it is more convenient to start with the hypothesis that the kernels of the Poisson tensors have trivial intersection, and the most important result in Proposition \ref{prop:schur} is the identity (\ref{decomp-poisson}). We will use it in Section \ref{sec:centralinvariants}
to show that, in the DS case, the $\lambda$-roots of the characteristic polynomials of the reduced and unreduced bi-Hamiltonian structures coincide. 
\end{rem}

We end this section by noticing that (\ref{decomp-poisson}) entails a general result in linear algebra, known as Schur determinant identity:
\begin{equation}
\label{Schur-determinant}
\det\begin{pmatrix} A_{(\lambda)} & B_{(\lambda)}\\ C_{(\lambda)} & D_{(\lambda)} \end{pmatrix}=
\det\left(A_{(\lambda)}-B_{(\lambda)}D_{(\lambda)}^{-1}C_{(\lambda)}\right)\,\det D_{(\lambda)}.
\end{equation}
In our setting, it means that the determinants of (the matrices representing) $P_{(\lambda)}$ and $P'_{(\lambda)}$ are related by 
\begin{equation}
\label{equal-determinant}
\det P_{(\lambda)}=(\det D_{(\lambda)})(\det P'_{(\lambda)}).
\end{equation}

\section{Reduction of exact bi-Hamiltonian manifolds}
\label{sec:exact-reduction}
A bi-Hamiltonian manifold $(\CM,P_1,P_2)$ is said to be {\em exact} if there exists a vector field $Z$ on $\CM$ such that $L_Z P_1=0$ and $L_Z P_2=P_1$. 
In terms of the corresponding Poisson brackets, this means that
\begin{equation}
Z\{F,G\}_1-\{ZF,G\}_1-\{F,ZG\}_1=0,\qquad Z\{F,G\}_2-\{ZF,G\}_2-\{F,ZG\}_2=\{F,G\}_1
\end{equation}
for all functions $F$, $G$ on $\CM$. The vector field $Z$ is called the {\em Liouville vector field\/} of the exact bi-Hamiltonian manifold. 
Exact bi-Hamiltonian manifolds have been studied in, e.g., \cite{Dorfman-book,Sergyeyev04}.

We now show that under a mild assumption on $Z$, the reduced bi-Hamiltonian manifold  is exact too. Combining this with a result of \cite{FL12}, stating that the central invariants of an exact semisimple Poisson pencil are constant, we obtain a criterion for proving the constancy of the central invariants of the reduced pencil,
provided that it admits a semisimple dispersionless limit.
This in particular applies to the case of Drinfeld-Sokolov considered in \cite{DLZ08}, for which we obtain an alternative proof of the constancy of the central invariants 
(relying on the fact, proved in \cite{DLZ08}, that the reduced DS pencil has a semisimple dispersionless limit --- see next section).

\begin{prop} 
Suppose $(\CM,P_1,P_2)$ to be an exact bi-Hamiltonian manifold, and the Liouville vector field $Z$ to be tangent to the symplectic leaf $\CS$. Then 
\begin{enumerate}
\item[a)] The restriction of $Z$ to $\CS$ projects onto a vector field $Z'$ on the quotient manifold $\CN$.
\item[b)] The reduced bi-Hamiltonian manifold $(\CN,P'_1,P'_2)$ is exact, with Liouville vector field $Z'$. 
\end{enumerate}
\end{prop}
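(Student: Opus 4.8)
The plan is to exploit the reduction recipe of Section~\ref{sec:bi-Ham-reduction}, which defines the reduced brackets via \eqref{red-pb} by pulling back to special extensions on $\CM$, and to transport the exactness relations $L_Z P_1=0$, $L_Z P_2=P_1$ through the projection $\pi:\CS\to\CN$. The hypothesis that $Z$ is tangent to $\CS$ is what makes part a) possible: I would first recall that the reduced manifold $\CN=\CS/E$ is the quotient by the integral leaves of $E=D\cap T\CS$, so a vector field on $\CS$ descends to $\CN$ exactly when its flow preserves this foliation, i.e.\ when $[Z,E]\subseteq E$ (more precisely, when $Z$ is projectable along the leaves of $E$). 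So the first step is to verify that $Z$, restricted to $\CS$, is $\pi$-related to a well-defined $Z'$ on $\CN$.

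For part a) the key computation is to show $[Z,\Gamma(E)]\subseteq \Gamma(E)$, where $E$ is generated (on $\CS$) by vectors of the form $P_2\alpha$ with $\alpha\in\ker P_1$. First I would take a section $X=P_2\alpha$ of $D$ with $P_1\alpha=0$, and compute $L_Z X=[Z,P_2\alpha]$. Using $L_Z P_2=P_1$ one obtains
\begin{equation}
\label{eq:LZX}
[Z,P_2\alpha]=(L_Z P_2)\alpha+P_2(L_Z\alpha)=P_1\alpha+P_2(L_Z\alpha)=P_2(L_Z\alpha),
\end{equation}
since $P_1\alpha=0$. To conclude that this lies in $D$ I must check that $L_Z\alpha\in\ker P_1$; this follows because $L_Z(P_1\alpha)=0$ expands, via $L_Z P_1=0$, to $P_1(L_Z\alpha)=0$. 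Hence $L_Z X=P_2(L_Z\alpha)\in D$, and since $Z$ is tangent to $\CS$ the bracket $[Z,X]$ is tangent to $\CS$ as well, so $L_Z X\in D\cap T\CS=E$. This shows $Z$ preserves $E$ and therefore descends to a vector field $Z'$ on $\CN$, proving a).

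For part b) I would verify the two exactness identities for $(\CN,P'_1,P'_2)$ directly on functions, using the defining relation \eqref{red-pb}: for $f,g$ on $\CN$ with extensions $F,G$ adapted to $D$ as in Section~\ref{sec:bi-Ham-reduction}, one has $\{f,g\}'_a\circ\pi=\{F,G\}_a$ on $\CS$ and $(Z'f)\circ\pi=Z(f\circ\pi)=ZF|_\CS$. The plan is to pull the identity $Z'\{f,g\}'_a-\{Z'f,g\}'_a-\{f,Z'g\}'_a$ back along $\pi$ and match it term by term with the unreduced identity $Z\{F,G\}_a-\{ZF,G\}_a-\{F,ZG\}_a$, which equals $0$ for $a=1$ and $\{F,G\}_1$ for $a=2$ by exactness on $\CM$. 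The main obstacle, and the point requiring care, is that $ZF$ need not be an admissible extension of $Z'f$ (its differential need not annihilate $D$), so $\{ZF,G\}_a$ is not literally $\{Z'f,g\}'_a\circ\pi$. I expect to resolve this by showing that $ZF$ differs from an admissible extension by a function whose differential lies in the $P_a$-annihilator of the relevant subspace, so that the bracket values agree on $\CS$; concretely, the tangency of $Z$ to $\CS$ together with $d(ZF)=L_Z(dF)+\dots$ and the fact that $dF$ annihilates $D$ should force the discrepancy to be $P_a$-orthogonal to $dG$ when evaluated at points of $\CS$. Once this compatibility of extensions under the Lie derivative by $Z$ is established, the two reduced identities follow immediately from the unreduced ones, giving $L_{Z'}P'_1=0$ and $L_{Z'}P'_2=P'_1$, which is b).
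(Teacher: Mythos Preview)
Your argument for part a) is correct and essentially identical to the paper's: you show $L_Z(D)\subset D$ by the same computation (if $\alpha\in\ker P_1$ then $L_Z(P_2\alpha)=P_2(L_Z\alpha)$ and $L_Z\alpha\in\ker P_1$), and then use tangency of $Z$ to $\CS$ to conclude $L_{Z}(E)\subset E$.

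For part b), however, you have misidentified the ``main obstacle''. You worry that $ZF$ may fail to be an admissible extension of $(Z'f)\circ\pi$, and you propose to correct it by an argument you leave vague. In fact $ZF$ \emph{is} already admissible, and this follows directly from what you proved in part a). Concretely: for any vector field $Y$ with values in $D$,
\[
\langle d(ZF),Y\rangle = L_Y(ZF) = (L_Y Z)F + Z(L_Y F) = -(L_Z Y)F + Z(0) = 0,
\]
because $L_Y F=0$ (since $dF$ annihilates $D$) and $L_Z Y\in D$ (this is exactly the inclusion $L_Z(D)\subset D$ you established in part a)), so $(L_Z Y)F=0$ as well. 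Hence $d(ZF)$ vanishes on $D$ at every point of $\CS$, $ZF$ is an admissible extension of $(Z'f)\circ\pi$, and the term-by-term matching you describe goes through with no correction needed. This is precisely how the paper proceeds. So your overall strategy is right, but the proof closes more cleanly than you anticipated: the ``compatibility of extensions under the Lie derivative by $Z$'' that you hoped for is an immediate consequence of part a), not a separate technical hurdle.
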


\begin{pf}
a) We denote with $Z^{(\CS)}$ the restriction of $Z$ to $\CS$, and we recall that $Z^{(\CS)}$ can be projected along the (integral leaves of the) distribution $E$ if and only if $L_{Z^{(\CS)}}(E)\subset E$. This inclusion is a consequence of the fact that $Z$ is tangent to $\CS$ and that $L_{Z}(D)\subset D$. The last assertion can be checked as follows. If
$F$ is a Casimir of $P_1$, then 
\begin{equation}
L_Z(P_2dF)=\left(L_Z P_2\right)dF+P_2 d\left(L_Z F\right)=P_1 dF+P_2 d\left(L_Z F\right)=P_2 d\left(L_Z F\right)\in D,
\end{equation}
since it is easily seen that $L_Z F$ is also a Casimir of $P_1$. Indeed, $P_1 d\left(L_Z F\right)=P_1 \left(L_Z dF\right)=L_Z\left(P_1 dF\right)-\left(L_Z P_1\right)dF=0$. Hence there exists a vector field $Z'$ on $\CN$ such that $(Z'f)\circ\pi=Z^{(\CS)}(f\circ\pi)$ for all functions $f$ on $\CN$.
\par\smallskip
b) Let $f,g$ be functions on $\CN$, and let $F,G$ be extensions on $\CM$ as explained in Section \ref{sec:bi-Ham-reduction}. First, we notice that $ZF$ is an extension of $(Z'f)\circ\pi$ and that its differential vanishes on the tangent vectors in $D_{\bf w}$, for all ${\bf w}\in\CS$. Indeed, $(Z'f)(\pi({\bf w}))=\left(Z^{(\CS)}(f\circ\pi)\right)({\bf w})=(ZF)({\bf w})$ for all ${\bf w}\in\CS$, meaning that $ZF$ extends $(Z'f)\circ\pi$. Moreover, if $Y$ is a vector field in $D$, then $L_Y(ZF)=(L_Y Z)F+Z(L_Y F)=0$ since $L_Y Z=-L_Z Y$ is also in $D$.\\
Hence, for $a=1,2$, we have that
\begin{equation}
\begin{array}{l}
\left(Z'\{f,g\}'_a\right)(\pi({\bf w}))-\{Z'f,g\}'_a(\pi({\bf w}))-\{f,Z'g\}'_a(\pi({\bf w}))\\
\qquad=\left(Z^{(\CS)}(\{f,g\}'_a\circ\pi)\right)({\bf w})-\{ZF,G\}_a({\bf w})-\{F,ZG\}_a({\bf w})\\
\qquad=\left(Z\{F,G\}_a\right)({\bf w})-\{ZF,G\}_a({\bf w})-\{F,ZG\}_a({\bf w}),
\end{array}
\end{equation}
so that 
\begin{equation}
Z'\{f,g\}'_1-\{Z'f,g\}'_1-\{f,Z'g\}'_1=0,\qquad Z'\{f,g\}'_2-\{Z'f,g\}'_2-\{f,Z'g\}'_2=\{f,g\}'_1.
\end{equation}
\end{pf}

As we will see in the next section, an important case where this result can be applied is the Drinfeld-Sokolov (DS) reduction.

\section{Drinfeld-Sokolov structures}
\label{sec:DS-structures}

In this section we recall the bi-Hamiltonian structure related to the (untwisted) Drinfeld-Sokolov (DS) hierarchies \cite{DS85}, and the corresponding result about its central invariants \cite{DLZ08}. We will not use the original DS reduction, but an alternative (although equivalent) procedure
based on the general reduction theorem for bi-Hamiltonian manifolds discussed in Section \ref{sec:bi-Ham-reduction}. We refer to \cite{H78} for the basic facts concerning simple Lie algebras (see also the Appendix).

To obtain the (reduced) DS Poisson pair, the starting point is the loop algebra $\CM$ of functions $w(x)$ from the unit circle $S^1$ to a simple finite-dimensional Lie algebra 
$\mathfrak g$. We choose a Cartan subalgebra 
$\mathfrak h$ of $\mathfrak g$, we consider the corresponding principal gradation, and we select Chevalley (sometimes also called Weyl) generators 
$\{X_i,H_i,Y_i\}_{i=1,\dots,n}$ 
(where $n$ is the rank of $\mathfrak g$) with degrees 1,0, and -1, respectively.
One can
identify the cotangent spaces and the tangent spaces at every point $w\in\CM$ with $\CM$ itself, by means of the $\mbox{ad}$-invariant bilinear form 
$(w_1,w_2)=\int_{S^1}\langle w_1(x),w_2(x)\rangle_{\mathfrak g}\,dx$, where $\langle \cdot,\cdot\rangle_{\mathfrak g}$ is the normalized Killing form. 
Then one introduces the Poisson pair
\begin{equation}
\label{DS-pb}
\left(P_1\right)_w v=\epsilon^{-1}[v,A],\qquad \left(P_2\right)_w v=\epsilon^{-1}[v,w]+v_x,
\end{equation}
where $A\in\mathfrak g$ 
is an element of maximal degree. It is easily checked, see (\ref{DS-pb-delta}), that the Poisson pair (\ref{DS-pb}) has the form (\ref{bi-poisson}).
Moreover, the symplectic leaves of $P_1$ are affine subspaces over the vector space of maps from $S^1$ to $\ker(\ad A)^\perp$, 
where $\ker(\ad A)$ is the isotropy algebra of $A$ and the orthogonal is taken with respect to $\langle \cdot,\cdot\rangle_{\mathfrak g}$. An explicit description of $\ker(\ad A)^\perp$ is provided in the Appendix.

We choose the symplectic leaf $\CS$
containing the element $I=\sum_{i=1}^n Y_i$. Then the reduced bi-Hamiltonian manifold 
$\CN$ turns out to be parametrized by $n$ scalar-valued functions, and the reduced Poisson pair coincides \cite{CP92,P} with the DS one. We refer to \cite{Dinar} for an extension of this result to the so-called generalized DS structures (an example will be discussed in Section \ref{sec:beyondDS}). 

\begin{example}
\label{exa:kdv-red}
Let us consider the simplest case, ${\mathfrak g}={\mathfrak{sl}}(2)$, leading to the bi-Hamiltonian structure of the KdV hierarchy. The normalized Killing form is simply the trace of the product, and we choose the Chevalley generators
\begin{equation}
X=\begin{pmatrix}0 & 1\\\ 0 & 0 \end{pmatrix},\quad 
H=\begin{pmatrix}1 & 0\\ 0 & -1\end{pmatrix},\quad
Y=\begin{pmatrix}0 & 0\\ 1 & 0 \end{pmatrix}.
\end{equation}
Hence 
\begin{equation}
w=\begin{pmatrix}\frac12 w^2 & w^1\\ w^3 & -\frac 12 w^2 \end{pmatrix},\quad A=\begin{pmatrix}0 & 1\\\ 0 & 0 \end{pmatrix},\quad
I=\begin{pmatrix}0 & 0\\ 1 & 0 \end{pmatrix},
\end{equation}
so that the generic element $w$ of the symplectic leaf $\CS$ and the vectors of the distribution $D$ (at $w$) are given by
\begin{equation}
w=\begin{pmatrix}\frac12 w^2 & w^1\\ 1 & -\frac12 w^2 \end{pmatrix},\qquad \begin{pmatrix} \epsilon^{-1}k & k_x-\epsilon^{-1}w^2 k\\\ 0 & -\epsilon^{-1}k \end{pmatrix},
\end{equation}
where $k:S^1\to\mathbb R$ is any function. In this particular case, $D\subset T\CS$ and so $E=D$. Then it is easily shown that the projection $\pi:\CS\to\CN$ is given by 
$u=\pi(w^1,w^2)=w^1+\frac14\left(w^2\right)^2-\frac12\epsilon {w^2}_x$
and that the reduced Poisson pair turns out to be
\begin{equation}
\label{DS-pp-red}
\left(P_1'\right)_u=2\partial_x,\qquad \left(P_2'\right)_u=u_x+2u\partial_x-\frac12\epsilon^{2}\partial_{xxx},
\end{equation}
that is, the one associated to the Poisson brackets (\ref{kdv-pb-red}).
\end{example}

In the DS case, the choice of a transversal submanifold $\CQ$ corresponds to the choice of a ``canonical form" of the matrix Lax operator \cite{DS85}. For example, in the KdV case, we can take 
\begin{equation}
\label{kdv-Q}
\CQ=\left\{\begin{pmatrix} 0 & u \\ 1 & 0\end{pmatrix}\mid u\in C^\infty(S^1,\mathbb{R})\right\}.
\end{equation}

\begin{example}
\label{exa:so5}
To illustrate the use of a transversal submanifold in a more complicated example, we recall the case of $\mathfrak{so}(5)$, already treated in \cite{CP92}. Notice however that the choice of the Chevalley generators is different here --- it is the same as in \cite{DS85}. In particular, $\mathfrak{so}(5)$ is the Lie algebra of $5\times 5$ matrices such that $wS=-Sw^T$, where $S=\diag(1,-1,-1,-1,1)$. A set of Chevalley generators is given by
\begin{equation}
\begin{array}{lll}
X_1=e_{21}+e_{54},\quad & Y_1=X_1^T,\quad & H_1=-e_{11}+e_{22}-e_{44}+e_{55},\\
X_2=e_{32}+e_{43},\quad & Y_2=2X_2^T,\quad & H_2=2(e_{44}-e_{22}),
\end{array}
\end{equation}
where $e_{ij}$ is the matrix whose unique nonzero entry is $(i,j)$, which is 1. The principal gradation is completed by the 1-dimensional subspaces with degrees 2,-2,3,-3, generated respectively by
$$
X_3=-e_{31}+e_{53},\quad Y_3=-2X_3^T,\quad X_4=e_{41}+e_{52},\quad Y_4=4X_4^T.
$$
The normalized Killing form is $\langle w_1,w_2\rangle_{\mathfrak{so}(5)}=\frac{1}{2}\tr(w_1 w_2)$. To define the Poisson pair (\ref{DS-pb}), we can choose $A=X_4$. Its isotropy algebra is spanned by $X_1$, $X_2$, $X_3$, $X_4$, $H_1$, $Y_1$, so that the symplectic leaf $\CS$ is the space of maps 
$$
x\mapsto w^1(x)X_4+w^2(x)X_2+w^3(x)X_3+w^4(x)(H_1+H_2)+Y_1+Y_2.
$$
A possible choice for the transversal submanifold is the subset $\CQ\subset\CS$ of the maps with $w^3(x)=w^4(x)=0$. Then the reduced structures turn out to be given by 
\begin{equation}
P_1'=\begin{pmatrix} \frac12\epsilon^2 \partial_x^3-2w^2\partial_x-w^2_x & 2\partial_x \\ 2\partial_x & 0\end{pmatrix},\qquad
P_2'=\begin{pmatrix} (P_2')_{11} & (P_2')_{12} \\ (P_2')_{21} & (P_2')_{22}\end{pmatrix},
\end{equation}
with
\begin{equation*}
\begin{aligned}
&(P_2')_{11}=\textstyle{-\frac{1}{16}\epsilon^6 \partial^{7}_x+\frac{1}{2}\epsilon^4 w^2 \partial^{5}_x +\frac{5}{4}\epsilon^4 w^2_x\partial^{4}_x 
+ \epsilon^2\left(\frac{1}{2} w^1-(w^2)^2+2 \epsilon^2 w^2_{xx}\right)\partial^{3}_x}\\
&
\textstyle{+\epsilon^2\left(\frac{3}{4} w^1_x -3 w^2 w^2_x+\frac{7}{4} \epsilon^2 w^2_{xxx}\right)\partial^2_x 
+\left(-2 w^1 w^2+\epsilon^2(-\frac{3}{4} (w^2_x)^2+\frac{3}{4} w^1_{xx}-2 w^2 w^2_{xx})+\frac{3}{4}\epsilon^4 w^2_{xxxx} \right)\partial_x}\\
&
\textstyle{-w^1_x w^2 - w^1 w^2_x+\epsilon^2(-\frac{1}{4}w^2_x w^2_{xx}+\frac{1}{4} w^1_{xxx}-\frac{1}{2} w^2 w^2_{xxx}) +\frac{1}{8} \epsilon^4 w^2_{xxxxx}}\\
\noalign{\smallskip}
&(P_2')_{12}=\textstyle{-\frac{1}{4} \epsilon^4\partial^{5}_x+\epsilon^2 w^2\partial^{3}_x +\frac{1}{2} \epsilon^2 w^2_x\partial^2_x +2 w^1 \partial_x+\frac{1}{2} w^1_x}\\
\noalign{\smallskip}
&(P_2')_{21}=\textstyle{-\frac{1}{4} \epsilon^4\partial^{5}_x+\epsilon^2 w^2\partial^{3}_x +\frac{5}{2} \epsilon^2 w^2_x\partial^2_x +
\left(2 w^1 +2 \epsilon^2 w^2_{xx}\right)  \partial_x+\frac{3}{2} w^1_x+\frac{1}{2} \epsilon^2 w^2_{xxx}}\\
\noalign{\smallskip}
&(P_2')_{22}=\textstyle{-\frac{5}{4}\epsilon^2 \partial^{3}_x+w^2 \partial_x+\frac{1}{2} w^2_x}
\end{aligned}
\end{equation*}
\end{example}

\bigskip\par\noindent

The following important results on the DS structures were shown in \cite{DLZ08}:
\begin{enumerate}
\item the (reduced) DS Poisson pair has the form (\ref{poissondisp}), and its dispersionless part is given by a semisimple flat pencil of metrics (described in \cite{S93}) on the orbit space ${\mathfrak h}/W$ of the Weyl group $W$ of the Lie algebra ${\mathfrak g}$; 
\item its central invariants are constant; more precisely, they are given by 
\begin{equation}
\label{central-inv-DS}
c_i=\frac1{48}\langle H_i,H_i\rangle_{\mathfrak g}. 
\end{equation}
\end{enumerate}

From our point of view, the constancy of the central invariants can be proved with the help of the results of Section \ref{sec:exact-reduction} 
and item 1 above. Indeed, 
it is easily seen that the Poisson pair (\ref{DS-pb}) 
is exact, with the Liouville vector field simply given by $Z_w=A$. This vector field is tangent to every symplectic leaf of $P_1$, since $A=[h,A]$ for a suitable element $h\in\mathfrak h$. (Notice that this property is true for every root vector in $\mathfrak g$). Hence we can conclude that the reduced bi-Hamiltonian manifold is exact too. Thanks to the results of \cite{FL12}, the corresponding central invariants are constant. 

\begin{example}
\label{exa:kdv-exact}
Using the same notations as in Example \ref{exa:kdv-red}, the Liouville vector field $Z$ on the (unreduced) manifold $\CM$ is 
$Z_w=
\left(\begin{smallmatrix}0 & 1\\ 0 & 0 \end{smallmatrix}\right)$. 
It is immediate to verify that it is tangent to $\CS$, can be projected to $\CN$, and its projection is $Z'_u=1$. It is also easily checked that $Z'$ is a Liouville vector field for the 
(reduced) manifold $\CN$.
\end{example}

\section{Invariants of DS structures}
\label{sec:centralinvariants} 

In this section we use the same notations as in the previous one, and we show the equality of the characteristic polynomials (\ref{charpoly}) of the reduced and unreduced DS bi-Hamiltonian structures. 
In Section \ref{sec:beyondDS} we will discuss the examples of a generalized DS hierarchy and of the Camassa-Holm equation, showing that the equality
holds in more general contexts than the DS reduction. 

Let us introduce a basis $\{e^l\}$ of the Lie algebra $\mathfrak g$, with $l=1,\dots,N$. If 
\begin{equation}
[e^i,e^j]=\sum_{l=1}^N c^{ij}_l e^l,\qquad g^{ij}=\langle e^i,e^j\rangle_{\mathfrak g},
\end{equation} 
then we can write the Poisson pencil (\ref{DS-pb}) in the form, analogous to (\ref{bi-poisson}), 
\begin{equation}
\label{DS-pb-delta}
\{w^i(x),w^j(y)\}_{(\lambda)}=-\epsilon^{-1} \sum_{l=1}^N c^{ij}_l (w^l-\lambda A^l) \delta(x-y)-g^{ij}\delta'(x-y),
\end{equation}
where $v^l=\langle v,e^l\rangle_{\mathfrak g}$ for any $v	\in{\mathfrak g}$. According to (\ref{poisson-symbol}) and (\ref{charpoly}), 
we associate to the DS (unreduced) Poisson pencil (\ref{DS-pb-delta}) on $\CM$ the $\lambda$-polynomial
\begin{equation}
\label{charpoly-up}
{\mathcal R}_\CM(p,\lambda;
w
)=\det\left(-\sum_{l=1}^N c^{ij}_{l} (w^l-\lambda A^l)-g^{ij}p\right).
\end{equation}
Note that the above polynomial is explicitly written in terms of Lie algebra objects; this is in general not so for the characteristic polynomial of the reduced pencil. However, we have:
\begin{theorem}
\label{theorem:equal-charpoly} The $\lambda$-roots of the characteristic polynomial are preserved by the bi-Hamiltonian reduction. More precisely, if $w\in\CS$, then 
\begin{equation}
\label{charpoly-conjecture}
{\mathcal R}_\CN(p,\lambda;\pi_0(
w
))=
F\,{\mathcal R}_\CM(p,\lambda;
w
),
\end{equation}
where $\pi_0:\CS\to\CN$ is the $\epsilon$-independent part of the projection $\pi:\CS\to\CN$ on the reduced bi-Hamiltonian 
manifold, ${\mathcal R}_\CN
$ is the characteristic polynomial of the (reduced) DS Poisson pair on $\CN$, and 
 $F$ is a non vanishing function, independent of $\lambda$.
\end{theorem}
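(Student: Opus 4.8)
The plan is to push the whole statement down to the level of symbols, where the characteristic polynomial \eqref{charpoly} lives, and to exploit the block factorization \eqref{decomp-poisson} of Proposition \ref{prop:schur}. The key observation is that the operation $\sigma$ sending a matrix differential operator to the symbol of its frozen-coefficient part — replace $\partial_x$ by $p$ and discard every coefficient of positive differential degree — is a ring homomorphism: in a composition of operators the correction terms produced when $\partial_x$ hits a coefficient all carry a derivative of $\mathbf w$ and therefore die under $\sigma$. Consequently $\sigma$ turns \eqref{decomp-poisson} into an honest factorization of symbol matrices, and by construction $\det\sigma(P_{(\lambda)})$ and $\det\sigma(P'_{(\lambda)})$ are exactly $\mathcal R_\CM$ and $\mathcal R_\CN$, the latter evaluated at the $\epsilon$-independent base point $\pi_0(w)$, since the $\epsilon$-corrections in the projection $\pi$ have differential degree $\geq 1$ and are killed by $\sigma$.

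First I would make this precise and apply it. Decomposing $\Sigma:=\sigma(P_{(\lambda)})$ into the blocks $A_{(\lambda)},B_{(\lambda)},C_{(\lambda)},D_{(\lambda)}$ dictated by the adapted coordinates, the classical Schur determinant identity \eqref{Schur-determinant} together with Proposition \ref{prop:schur}b read at the symbol level gives
\begin{equation*}
\mathcal R_\CM(p,\lambda;w)=\mathcal R_\CN(p,\lambda;\pi_0(w))\,\det\sigma(D_{(\lambda)}).
\end{equation*}
Thus \eqref{charpoly-conjecture} holds with $F=\big(\det\sigma(D_{(\lambda)})\big)^{-1}$, and $F$ is non-vanishing because $D_{(\lambda)}$ is invertible (Proposition \ref{prop:schur}a). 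Everything then reduces to showing that $\det\sigma(D_{(\lambda)})$ is independent of $\lambda$.

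To prove the $\lambda$-independence I would not argue entry by entry — this already fails for $\mathfrak{so}(5)$, where the transverse block genuinely contains $\lambda$ — but by a degree count based on the principal gradation. Attach to each index $i$ the principal degree $g_i$ of the corresponding basis vector. For the pencil written in Lie-algebraic form \eqref{DS-pb-delta} the coefficient of $\lambda$ in the $(i,j)$ symbol entry is $\langle[e^i,e^j],A\rangle$ (so all $\lambda$-dependence reflects the exact shift $w\mapsto w-\lambda A$, $Z_w=A$), which is nonzero only when $g_i+g_j=d:=\deg A$, the top degree, whereas the $w$- and $p$-terms are supported on $g_i+g_j\geq -1$: indeed $A$ has maximal degree and $w\in\CS=I+\ker(\ad A)^\perp$ has minimal degree $-1$, because $\ker(\ad A)\supseteq\bigoplus_{g>0}\mathfrak g_g$ forces $\ker(\ad A)^\perp\subseteq\bigoplus_{g\geq 0}\mathfrak g_g$. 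In $\det\Sigma=\sum_\sigma\mathrm{sgn}(\sigma)\prod_i\Sigma_{i\,\sigma(i)}$ a monomial in $\lambda^k$ uses $k$ factors with $g_i+g_{\sigma(i)}=d$ and $N-k$ factors with $g_i+g_{\sigma(i)}\geq -1$; since $\sum_i(g_i+g_{\sigma(i)})=2\sum_i g_i=0$ (the gradation is symmetric), this yields $0\geq kd-(N-k)$, i.e. $k\leq N/(d+1)=N/h$, hence $k\leq n$, using $d=h-1$ and $N=\dim\mathfrak g=n(h+1)$ with $h$ the Coxeter number. Therefore $\deg_\lambda\mathcal R_\CM\leq n$.

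On the other hand, by item 1 of the recalled DS results the reduced dispersionless limit is a semisimple flat pencil of metrics, so $\mathcal R_\CN=p^n\det(g^{ij}_2-\lambda g^{ij}_1)+O(p^{n+1})$ has $\lambda$-degree exactly $n$. Combining this with the factorization, $n=\deg_\lambda\mathcal R_\CN\leq\deg_\lambda\mathcal R_\CM\leq n$ forces $\det\sigma(D_{(\lambda)})$ to have $\lambda$-degree $0$, i.e. to be independent of $\lambda$, which is precisely what is needed. I expect the grading estimate $\deg_\lambda\mathcal R_\CM\leq n$ to be the main obstacle: it is where the specific structure of the DS reduction — the maximality of $\deg A$, the minimal degree $-1$ of the points of $\CS$, and the symmetry of the principal gradation — actually enters, whereas the factorization itself is formal once the homomorphism property of $\sigma$, including through the pseudodifferential inverse $D_{(\lambda)}^{-1}$, has been justified.
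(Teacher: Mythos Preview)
Your overall architecture coincides with the paper's: pass to symbols via the multiplicativity of $\sigma$, apply the block factorization \eqref{decomp-poisson} of Proposition~\ref{prop:schur} to get $\mathcal R_\CM=(\det\sigma(D_{(\lambda)}))\,\mathcal R_\CN$, and then kill the $\lambda$-dependence of $\det\sigma(D_{(\lambda)})$ by comparing the $\lambda$-degrees of the two characteristic polynomials. The paper states and proves this at the level of the transversal $\CQ$ (Theorem~\ref{theorem:equal-charpoly-Q}), which is exactly your ``adapted coordinates'' picture; your remark that the $\epsilon$-corrections in $\pi$ are killed by $\sigma$ is the bridge to the $\CN$ formulation.

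The genuine difference is in how you bound $\deg_\lambda\mathcal R_\CM$. The paper (Appendix~A.3) argues spectrally: it uses that $\Lambda_1=I-A$ is regular semisimple (Kostant), rescales via $\lambda^{\frac1h\ad\rho^\vee}$ to transport the eigenvalues of $\Lambda_1$ to those of $\Lambda_\lambda$, and then invokes a perturbation argument for general $w\in\CS$. Your route is purely combinatorial: grade the entries of the symbol matrix by the principal gradation, observe that the $\lambda$-part of $\Sigma^{ij}$ is supported on a single degree while the $\lambda$-free part is bounded below, and count. This is more elementary and avoids both Kostant's theorem and matrix perturbation theory; the paper's approach, on the other hand, actually yields the exact equality $\deg_\lambda\mathcal R_\CM=n$ and the asymptotics of the $\lambda$-roots, which is extra information you do not need here.

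One arithmetical slip to fix: with $d=h-1$ and $N=n(h+1)$ you get $k\le N/(d+1)=n(h+1)/h=n+\tfrac{n}{h}$, not $n$. The conclusion $k\le n$ still follows because $n<h$ for every simple Lie algebra (check the list in Table~1), so $n(h+1)/h<n+1$ and $k$ is an integer; but you should say this rather than implicitly equate $N/h$ with $n$.
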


In terms of the reduced Poisson pair on a transversal manifold $\CQ$,  Theorem \ref{theorem:equal-charpoly} admits the following equivalent form:
\begin{theorem}
\label{theorem:equal-charpoly-Q}
Let $\CQ$ be a transversal submanifold of the DS symplectic leaf $\CS$, and ${\mathcal R}_\CQ$ the characteristic polynomial associated with the reduced Poisson pair 
on $\CQ$. Then, for all $w\in\CQ$, 
\begin{equation}
\label{charpoly-conjecture-Q}
{\mathcal R}_\CQ(p,\lambda;{
w})=
F_\CQ\,{\mathcal R}_\CM(p,\lambda;{
w}),
\end{equation}
where 
$F_\CQ$ is a non vanishing function, independent of $\lambda$.
\end{theorem}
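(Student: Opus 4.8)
The plan is to identify each characteristic polynomial with the determinant of the \emph{symbol} of the corresponding Poisson operator and then to push the Schur factorization \eqref{decomp-poisson} through this symbol map. First I would make precise the operation underlying \eqref{poisson-symbol}. It sends a matrix differential operator $\Pi=\sum_m a_m({\bf w},{\bf w}_x,\dots)\,\partial_x^m$ to the matrix $S(\Pi)=\sum_m a_m^0({\bf w})\,p^m$, where $a_m^0$ is the part of the coefficient $a_m$ free of $x$-derivatives of ${\bf w}$ (equivalently: set $\epsilon=1$, discard the terms of positive differential degree, and replace $\partial_x$ by $p$). The elementary but crucial fact is that $S$ is an \emph{algebra homomorphism} onto matrices over the commutative ring of polynomials in $p$: since $\partial_x$ acting on a coefficient $f({\bf w})$ produces $f'({\bf w})\,{\bf w}_x$, it can only raise the differential degree, so in a product $\Pi_1\Pi_2$ the degree-zero part arises solely from multiplying the degree-zero parts of the factors with no derivative falling on a coefficient; hence $S(\Pi_1\Pi_2)=S(\Pi_1)\,S(\Pi_2)$, and $S$ kills commutators. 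The same computation shows that $S$ extends to pseudodifferential operators with $S(\Pi^{-1})=S(\Pi)^{-1}$ whenever $S(\Pi)$ is invertible. By construction ${\mathcal R}_\CM=\det S(P_{(\lambda)})$ and ${\mathcal R}_\CQ=\det S(P'_{(\lambda)})$.

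Next I would apply $S$ to the identity \eqref{decomp-poisson}, written in coordinates on $\CM$ adapted to $\CQ$. As $S$ is a homomorphism sending the zero block to zero, it carries the two factors into block-triangular matrices over the polynomial ring; taking determinants and using that the second (unipotent) factor has determinant $1$ gives
\beq
\label{plan-factored}
\widehat{\mathcal R}_\CM=\det S(P'_{(\lambda)})\cdot\det S(D_{(\lambda)})={\mathcal R}_\CQ\cdot\det S(D_{(\lambda)}),
\eeq
where $\widehat{\mathcal R}_\CM$ denotes the characteristic polynomial in the adapted coordinates. The passage from the Lie-algebra coordinates of \eqref{charpoly-up} to the adapted ones (for KdV, $u=w^1+\tfrac14(w^2)^2-\tfrac12\epsilon\,w^2_x$) is a Miura transformation, so by the covariance rule \eqref{TRPP} it multiplies ${\mathcal R}_\CM$ by the $\lambda$-independent factor $\det l(p)\,\det l(-p)$, which I absorb into $F_\CQ$. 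Note that $\det S(D_{(\lambda)})$ is already nonvanishing, since $D_{(\lambda)}$ is invertible by Proposition \ref{prop:schur}(a) and $S$ is a homomorphism; thus \eqref{plan-factored} yields a relation ${\mathcal R}_\CQ=F_\CQ\,{\mathcal R}_\CM$ with $F_\CQ$ nonvanishing.

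It remains to prove that $F_\CQ$, equivalently $\det S(D_{(\lambda)})$, is \emph{independent of $\lambda$}; this is the heart of the matter and the step I expect to be the main obstacle. The only $\lambda$-dependence of the symbol is linear and carried by $S(P_1)$, which via the bilinear form is (up to sign) the matrix of $\ad A$; indeed, up to the $\lambda$-independent Gram factor the full symbol matrix equals $p\,\mbox{Id}-\ad_{w-\lambda A}$, so that ${\mathcal R}_\CM$ is proportional to the characteristic polynomial in $p$ of the Lax matrix $\ad_{w-\lambda A}$. Writing $S(D_{(\lambda)})=S(D_2)-\lambda\,S(D_1)$, I must show that the $\lambda$-linear term $S(D_1)$ --- the complementary block of $\ad A$ --- does not affect the determinant. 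The natural tool is the principal gradation: because $A$ is homogeneous, $\ad A$ strictly raises the principal degree and annihilates $\ker(\ad A)$, so a degree-adapted ordering of the complementary coordinates, which span $\ker(\ad A)\oplus E$ as tangent directions, should render $S(D_{(\lambda)})$ block-triangular with a $\lambda$-free diagonal, forcing $\det S(D_{(\lambda)})=\det S(D_2)$. Conceptually this is the invariance of the spectral data of the Lax matrix along the Drinfeld-Sokolov gauge directions --- the leaves of $E$, which the reduction precisely quotients out --- and its global counterpart is the exactness of the pencil, whose Liouville field $Z=A$ generates the translation $w\mapsto w-\lambda A$ carrying $P_2$ to $P_{(\lambda)}$.

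The genuinely delicate point, on which I would concentrate, is the identification of the complementary block with respect to the gradation and the verification that its determinant is truly $\lambda$-free. A cleaner alternative, once available, is to observe that if $D_{(\lambda)}$ is invertible for \emph{every} $\lambda$, then the polynomial $\det S(D_{(\lambda)})$ in $\lambda$ vanishes for no value of $\lambda$ and is therefore constant in $\lambda$. Either route gives $F_\CQ=\big(\det S(D_{(\lambda)})\big)^{-1}$ (times the $\lambda$-independent Miura factor), nonvanishing and independent of $\lambda$, which is the assertion of Theorem \ref{theorem:equal-charpoly-Q}; the equivalent Theorem \ref{theorem:equal-charpoly} then follows via the local diffeomorphism between $\CQ$ and $\CN$.
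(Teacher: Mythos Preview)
Your setup through equation \eqref{plan-factored} is correct and is precisely the paper's route: the symbol map $S$ is a ring homomorphism, so applying it to the block factorization \eqref{decomp-poisson} yields $\det\pi_{(\lambda)}=(\det\delta_{(\lambda)})(\det\pi'_{(\lambda)})$, which is the paper's \eqref{equal-determinant-symb}. (The Miura detour is unnecessary: in the DS case $\CQ$ is an affine subspace and the adapted coordinates are already linear combinations of the Lie-algebra coordinates, so no coordinate change of type \eqref{MT} intervenes.)

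The gap is in the $\lambda$-independence of $\det S(D_{(\lambda)})$. Your ``cleaner alternative'' is a genuine error: $\det S(D_{(\lambda)})$ is a polynomial in $\lambda$ whose coefficients lie in a ring of functions of $p$ and $w$, not in an algebraically closed field. Over such a ring a nonconstant polynomial in $\lambda$ need not have a root; for instance $p-\lambda\in\mathbb{C}[p][\lambda]$ is never the zero element of $\mathbb{C}[p]$ for any $\lambda\in\mathbb{C}$, yet it is not constant in $\lambda$. So the implication ``no root $\Rightarrow$ constant'' fails, and invertibility of $D_{(\lambda)}$ for every $\lambda$ does not by itself force $\det\delta_{(\lambda)}$ to be $\lambda$-free. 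Your approach (a) via the principal gradation is plausible in spirit but is left as a heuristic; you do not identify the complementary block or verify the claimed triangularity, and you yourself flag this as the delicate point.

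The paper closes the gap by a different, purely degree-theoretic argument: it shows that \emph{both} ${\mathcal R}_\CM$ and ${\mathcal R}_\CQ$ have $\lambda$-degree exactly $n=\mathrm{rank}\,\mathfrak g$, which forces the factor $\det\delta_{(\lambda)}$ in \eqref{equal-determinant-R} to have $\lambda$-degree zero. For ${\mathcal R}_\CQ$ this is immediate from the semisimplicity of the dispersionless limit. For ${\mathcal R}_\CM$ the paper works in the adjoint representation, where the symbol matrix is essentially $p\,\mathrm{Id}+\ad(w-\lambda A)$; conjugating by $\lambda^{(1/h)\ad\rho^\vee}$ rescales $\Lambda_\lambda=I-\lambda A$ to $\lambda^{1/h}\Lambda_1$ (Lemma \ref{lemmaA3}), and then Kostant's regularity of $\Lambda_1$ together with a perturbation argument gives the eigenvalue asymptotics that pin the $\lambda$-degree at $n$. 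This Lie-theoretic computation is the substantive content you are missing.
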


In the examples  we considered, $F$ and $F_\CQ$ do not even depend on $p$ and on $w$, i.e., they are constant depending only the choice of the basis $\{e^l\}$. We guess that this is a general fact.
\par\smallskip\noindent
{\bf Proof of Theorem \ref{theorem:equal-charpoly-Q}.} 
Proposition \ref{prop:schur} can be applied in our case, since the condition $(\ker P_1)_w
\cap (\ker P_2)_w
=\{0\}$ for all $w
\in\CQ$ is satisfied. This was shown in \cite[Proposition 3.2]{CFMP98}. (In that paper, the case $\mathfrak g={\mathfrak{sl}}(n)$ was considered, 
but the proof is the same for every simple Lie algebra. For the reader's convenience, we explicitly provide such general proof in Appendix A.2).
Then formula (\ref{decomp-poisson}) holds, and for the corresponding power series in $p$, see (\ref{poisson-symbol}), we obtain
\begin{equation}
\label{decomp-poisson-symb}
\pi_{(\lambda)}=\begin{pmatrix}  \pi'_{(\lambda)} & \beta_{(\lambda)}\\ 0 & \delta_{(\lambda)}\end{pmatrix}
\begin{pmatrix} \mbox{Id} & 0\\ \delta_{(\lambda)}^{-1}\gamma_{(\lambda)} & \mbox{Id}
\end{pmatrix}.
\end{equation}
Indeed, let $P$ and $R$ be $N\times N$ matrices of differential operators with entries 
$$
P^{ij}=\sum_{k\ge -1}\epsilon^k\sum_{l=0}^{k+1}A^{ij}_{k,l}
\partial_x^{k-l+1},\qquad
R^{ij}=\sum_{k\ge -1}\epsilon^k\sum_{l=0}^{k+1}B^{ij}_{k,l}
\partial_x^{k-l+1},
$$
where $A^{ij}_{k,l}({\bf w},{\bf w}_x,\dots,{\bf w}_{(l)})$ and $B^{ij}_{k,l}({\bf w},{\bf w}_x,\dots,{\bf w}_{(l)})$ are differential polynomials of degree $l$, as in (\ref{poisson}). 
If $\pi$ and $\rho$ are the corresponding matrices with entries
$$
\pi^{ij}=\sum_{k\ge -1}A^{ij}_{k,0}({\bf w})p^{k+1},\qquad
\rho^{ij}=\sum_{k\ge -1}B^{ij}_{k,0}({\bf w})p^{k+1},
$$
then it is easily shown that the matrix associated to $PR$ is $\pi\rho$. This implies that if $P$ is invertible (as a matrix differential operator), then
also $\pi$ is invertible (as a matrix), and the matrix associated to $P^{-1}$ is $\pi^{-1}$.

Therefore, from (\ref{decomp-poisson-symb}) it follows that
\begin{equation}
\label{equal-determinant-symb}
\det \pi_{(\lambda)}=(\det \delta_{(\lambda)})(\det \pi'_{(\lambda)}),
\end{equation}
where $\det \delta_{(\lambda)}$ never vanishes. Hence 
\begin{equation}
\label{equal-determinant-R}
{\mathcal R}_\CM(p,\lambda;{
w})=
(\det \delta_{(\lambda)})
{\mathcal R}_\CQ(p,\lambda;{
w})
\end{equation}
for all $w
\in\CQ$, and we are left with showing that $\det \delta_{(\lambda)}$ does not depend on $\lambda$. This follows from the fact that the degree in $\lambda$ of both 
${\mathcal R}_\CM(p,\lambda;{
w})$ and ${\mathcal R}_\CQ(p,\lambda;{
w})$ is equal to the rank $n$ of the Lie algebra $\mathfrak g$. As far as ${\mathcal R}_\CQ(p,\lambda;{
w})$ is concerned, it is a general property of $n$-field semisimple flat pencils of metrics. The result concerning  ${\mathcal R}_\CM(p,\lambda;{
w})$ is proved in the Appendix and makes use of Lemma \ref{lemmaA3}, which in turn follows from \cite{MRV16} (see also \cite{MRV17}).
\par\hfill$\square$\par\smallskip

\begin{example}
\label{exa:kdv-conj}
Consider once more the case ${\mathfrak g}={\mathfrak{sl}}(2)$, as in Example \ref{exa:kdv-red}. In the coordinates $(w^1,w^2,w^3)$ (that is, using the basis $e^1=Y$, $e^2=H$, $e^3=X$) the matrix polynomial (\ref{poisson-symbol}) associated to the Poisson pencil (\ref{DS-pb-delta}) is given by (\ref{sl2symbol-sec2}), 
see Example \ref{exa:kdv-sec2}, so that the corresponding characteristic polynomial, evaluated at the points of the symplectic leaf $\CS$, is 
\begin{equation}
{\mathcal R}_\CM(p,\lambda;w^1,w^2)=2p^3-8\left(w^1+\frac14\left(w^2\right)^2-\lambda\right)p. 
\end{equation}
Since $\pi_0:(w^1,w^2)\mapsto u=w^1+\frac14\left(w^2\right)^2$ 
and ${\mathcal R}_\CN(p,\lambda;u)$ is given by (\ref{kdv-charpoly}), we see that (\ref{charpoly-conjecture}) is satisfied with $F=\frac14$.
\end{example}

\begin{example}
In the case of $\mathfrak{so}(5)$, see Example \ref{exa:so5}, we have that the characteristic polynomial ${\mathcal R}_\CM$, evaluated at the points of the 
transversal manifolds $\CQ$, is
\begin{equation}
256\,{\mathcal R}_\CQ(p,\lambda;w^1,w^2)=4 p^2 \left(-32 \lambda +p^4-8 p^2 w^2+32 w^1+
16 \left(w^2\right)^2\right) \left(8 \lambda +p^4-4 p^2 w^2-8 w^1\right).
\end{equation}
\end{example}

\section{Beyond DS structures}
\label{sec:beyondDS} 
In this final section we discuss two examples, in order to support the claim that Theorem \ref{theorem:equal-charpoly-Q} holds in more general contexts than the DS reduction. 
The first one is related to the $W_3^{(2)}$ conformal algebra of \cite{MO91}, i.e., to the so called fractional KdV hierarchy  ${\mathfrak{sl}}_3^{(2)}$ discussed in \cite{BDHM93}. It has been already treated from the bi-Hamiltonian point of view in \cite{CFMP97}, and belongs to the class of generalized DS bi-Hamiltonian structures (see \cite{BDHM93,Dinar}). The second example is the Poisson pair of the Camassa-Holm (CH) 
equation \cite{CH93}.

We consider again the Poisson tensors (\ref{DS-pb}), with ${\mathfrak g}={\mathfrak{sl}}_3$ and
\begin{equation}
A=\begin{pmatrix}0 & 1 & 0\\\ 0 & 0 & 1\\ 0 & 0 & 0\end{pmatrix}.
\end{equation}
We choose the symplectic leaf $\CS$ of $P_1$ containing the point 
\begin{equation}
I=\begin{pmatrix}0 & 0 & 0\\\ 0 & 0 & 0\\ 1 & 0 & 0\end{pmatrix}.
\end{equation}
It can be checked that its elements are 
\begin{equation}
\begin{pmatrix} p_0 & u_1 & u_3\\\ p_1 & u_0-p_0 & u_2\\ 1 & -p_1 & -u_0\end{pmatrix},
\end{equation}
and that a transversal submanifold $\CQ$ is given by $p_0=p_1=0$. The reduced Poisson pencil is\footnote{The formulae written in \cite{CFMP97} contain a misprint. Those given in the present paper are correct.}
\begin{equation}
P_{(\lambda)}'=\left(\begin{array}{cccc}
\frac{2}{3}\partial_x   &  \epsilon^{-1}(u_1-\lambda)  & \epsilon^{-1}(-u_2+\lambda) &
-\frac{1}{3}\left(\epsilon\partial_x^2 + u_0\partial_x +{u_0}_x\right)\\
 \ast & 0 &(P_{(\lambda)}')_{23}&  2(u_1-\lambda)\partial_x+{u_1}_x+2\epsilon^{-1}u_0(u_1-\lambda)\\
 \ast & \ast & 0 & (u_2-\lambda)\partial_x+{u_2}_x-2\epsilon^{-1}u_0(u_2-\lambda)\\
 \ast & \ast & \ast &
(P_{(\lambda)}')_{44}
\end{array}\right),
\end{equation}
where
\begin{equation}
\begin{array}{l}
(P_{(\lambda)}')_{23}=\epsilon\partial_x^2 + 3 u_0 \partial_x + 2 {u_0}_x+ \epsilon^{-1}\left(2 {u_0}^2-u_3\right) \\ 
(P_{(\lambda)}')_{44}=-\frac{2}{3} \epsilon^2\partial_x^3 - \frac{4}{3}\epsilon {u_0}_x\partial_x+ 2\left(u_3+ \frac{1}{3} {u_0}^2\right)\partial_x
- \frac{2}{3}\epsilon {u_0}_{xx}+ \frac{2}{3} u_0 {u_0}_x+  {u_3}_x.
\end{array}
\end{equation}
Notice that the reduced Poisson pair, like the unreduced one, is exact but it does not admit a dispersionless limit. However, one can check
that the characteristic polynomial (\ref{charpoly-up}), computed at the points of $\CQ$, coincides with 
$-3{\mathcal R}_\CQ(p,\lambda;u_0,u_1,u_2,u_3)$, so that Theorem \ref{theorem:equal-charpoly-Q} holds in this case with $F_\CQ=-\frac13$.

Now we pass to the CH case. First of all, we need a brief summary of the fact \cite{LP04} that the CH Poisson pair can also be obtained by reduction from a Poisson pair on loop algebras, very similar to (\ref{DS-pb}).

Consider again the loop algebra $\CM$ of functions $w(x)$ from 
$S^1$ to $\mathfrak{sl}(2)$, endowed now with the Poisson pair
\begin{equation}
\label{ch-pb}
\left(P_1\right)_w v=\epsilon^{-1}[v,w],\qquad \left(P_2\right)_w v=\epsilon^{-1}[v, A]+v_x,
\end{equation}
where $ A=X+Y=\left(\begin{smallmatrix}0 & 1\\ 1 & 0 \end{smallmatrix}\right)$. Also in this case, the Poisson pair has the form (\ref{bi-poisson}). We choose the symplectic leaf of $P_1$ given by $\CS=\{w\in\CM\mid \det w=0\}$. Since the quotient $\CN$ is not easy to parametrize, we introduce the transversal submanifold 
\begin{equation}
\label{ch-Q}
\CQ=\left\{\begin{pmatrix} 0 & u \\ 0 & 0\end{pmatrix}\mid u\in C^\infty(S^1,\mathbb{R}), u(x)\ne 0\,\forall x\in S^1\right\}.
\end{equation}
The reduced Poisson brackets turn out to be the ones of the Camassa-Holm hierarchy, that is,
\begin{equation}
\label{ch-pb-red}
\{u(x),u(y)\}_1=-u_x\delta(x-y)-2u\delta'(x-y),\qquad \{u(x),u(y)\}_2=-2\delta'(x-y)+\frac12\epsilon^2\delta'''(x-y),
\end{equation}
to be compared with (\ref{kdv-pb-red}). The corresponding characteristic polynomial is thus 
\begin{equation}
{\mathcal R}_\CQ(p,\lambda;u)=
-2p+\frac12 p^3+2\lambda u p.
\end{equation} 
Let us compute now the characteristic polynomial of the (unreduced) CH structure. In the same coordinates $(w^1,w^2,w^3)$ used in 
Example \ref{exa:kdv-red}, the matrix polynomial (\ref{poisson-symbol}) associated to the Poisson pencil (\ref{ch-pb}) is
\begin{equation}
\begin{pmatrix} 0 & 2(\lambda w^1-1) & -\lambda w^2-p\\ -2(\lambda w^1-1) & -2p & 2(\lambda w^3-1) \\ \lambda w^2-p & -2(\lambda w^3-1) & 0\end{pmatrix},
\end{equation}
so that the corresponding characteristic polynomial is 
\begin{equation}
{\mathcal R}_\CM(p,\lambda;w^1,w^2,w^3)=2p^3-2p\left[\left(4w^1w^3+\left(w^2\right)^2\right)\lambda^2-4(w^1+w^3)\lambda+4\right]. 
\end{equation}
Since $\CQ$ is defined by $w^1=u$, $w^2=w^3=0$, we have that 
\begin{equation}
{\mathcal R}_\CQ(p,\lambda;u)=\frac14{\mathcal R}_\CM(p,\lambda;u,0,0). 
\end{equation}
We can conclude that also in the CH case a relation of the form (\ref{charpoly-conjecture-Q}) holds.

\section*{Appendix}

In this appendix we collect some facts, concerning simple Lie algebras, used in the paper.

\subsection*{A.1 The symplectic leaves of $P_1$}

Let $\mf{g}$ be a simple Lie algebra of rank $n$ over $\bb{C}$. Fix a Cartan subalgebra $\mf{h}$, denote by $\Delta\subset\mf{h}^\ast$ the root space of $\mf{g}$, and for every $\alpha\in\Delta$ denote by $E_\alpha\in\mf{g}$ a nonzero vector in the corresponding root space, so that $\mf{g}$ decomposes as
$$\mf{g}=\mf{h}\oplus\bigoplus_{\alpha\in\Delta}\bb{C}E_\alpha.$$
Fix a base of simple roots $\Pi=\{\alpha_1,\dots,\alpha_n\}\subset\Delta$, and denote by $\Delta_+$ (resp.\ $\Delta_-$) the corresponding set of positive (resp.\ negative) roots. Denote $\mf{n}_{\pm}=\bigoplus_{\alpha\in\Delta_{\pm}}\bb{C}E_\alpha$, from which the decomposition $\mf{g}=\mf{n}_-\oplus\mf{h}\oplus\mf{n}_+$ follows. Given  $\alpha\in\Delta$, denote by $\Ht(\alpha)$ the height of $\alpha$ relative to $\Pi$, and extend this to $\mf{g}$ by setting $\Ht(E_\alpha)=\Ht(\alpha)$, $\alpha\in\Delta$, 
and $\Ht(h)=0$, $h\in\mf{h}$. Let $h^\vee$ be the dual Coxeter number of $\mf{g}$, and introduce the normalized Killing form
$$
\langle x,y\rangle_{\mathfrak g}=\frac{1}{2h^\vee}\tr(\ad x\ad y), \qquad x,y\in\mf{g}.
$$
Let $\theta\in\Delta$ be the highest root of $\mf{g}$, and consider the corresponding root vector $E_\theta$ (this is a possible choice of the element $A$ entering definition of the Poisson structure $P_1$ in Section \ref{sec:DS-structures}). 
To describe the symplectic leaf
$$
\ker(\ad E_\theta)^\perp=\{x\in\mf{g}\mid \langle x,y\rangle_{\mathfrak g}=0\,\,\forall\, y\in\ker(\ad E_\theta)\}
$$
in more detail, we introduce the following gradation. Let $\nu:\mf{h}\to\mf{h}^\ast$ be the isomorphism of vector spaces given by 
$\langle \nu(x),y\rangle=\langle x,y\rangle_{\mathfrak g}$, $x,y\in\mf{h}$, and let $\theta^\vee=\nu^{-1}(\theta)\in\mf{h}$. Then $\mf{g}$ decomposes as follows,
$$\mf{g}=\bigoplus_{i=-2}^{2}\mf{g}^i,$$
where $\mf{g}^i=\{x\in\mf{g}\mid [\theta^\vee,x]=i\,x\}$. The subalgebra $\mf{g}^0$ is a reductive subalgebra which contains $\mf{h}$, while $\mf{g}^1\oplus\mf{g}^2$, 
 (resp. $\mf{g}^{-1}\oplus\mf{g}^{-2}$) is a nilpotent subalgebra contained in $\mf{n}_+$ (resp.\ $\mf{n}_-$). Moreover, it can be shown \cite{MR18} that $\dim\mf{g}^1=\dim\mf{g}^{-1}=2(h^\vee-2)$ and  $\mf{g}^{\pm2}=\bb{C}E_{\pm\theta}$. 

\begin{lem}
We have that 
$\ker(\ad E_\theta)^\perp=\bb{C}\theta^\vee\oplus\mf{g}^1\oplus\mf{g}^2$. In particular, $\dim\left(\ker(\ad E_\theta)^\perp\right)=2h^\vee-2$.
\end{lem}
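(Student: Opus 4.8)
The plan is to replace the orthogonal complement by an image, using the invariance of the form. Since $\ad E_\theta$ is skew with respect to $\langle\cdot,\cdot\rangle_{\mf g}$, i.e. $\langle[E_\theta,x],y\rangle_{\mf g}=-\langle x,[E_\theta,y]\rangle_{\mf g}$, one has $\mathrm{im}(\ad E_\theta)^\perp=\ker(\ad E_\theta)$, and by non-degeneracy of the form this gives $\ker(\ad E_\theta)^\perp=\mathrm{im}(\ad E_\theta)$. Thus it suffices to identify the image of $\ad E_\theta$.

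First I would exploit the gradation. Since $E_\theta\in\mf g^2$, the operator $\ad E_\theta$ sends $\mf g^i$ into $\mf g^{i+2}$, so $\mathrm{im}(\ad E_\theta)=[E_\theta,\mf g^{-2}]\oplus[E_\theta,\mf g^{-1}]\oplus[E_\theta,\mf g^0]$, the three summands lying in the distinct graded pieces $\mf g^0,\mf g^1,\mf g^2$ (the contributions of $\mf g^1,\mf g^2$ vanish because $\mf g^3=\mf g^4=0$, and the summands are in a direct sum because they sit in different graded components). The outer two summands are immediate. As $\mf g^{-2}=\bb C E_{-\theta}$ and, by the standard identity, $[E_\theta,E_{-\theta}]=\langle E_\theta,E_{-\theta}\rangle_{\mf g}\,\theta^\vee$ with $\langle E_\theta,E_{-\theta}\rangle_{\mf g}\neq0$, we get $[E_\theta,\mf g^{-2}]=\bb C\theta^\vee$; and since $[\theta^\vee,E_\theta]=2E_\theta\neq0$ with $\theta^\vee\in\mf g^0$, we get $[E_\theta,\mf g^0]=\mf g^2=\bb C E_\theta$.

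The crux, and the only genuinely nontrivial step, is the middle summand: I must show $[E_\theta,\mf g^{-1}]=\mf g^1$, i.e. that $\ad E_\theta\colon\mf g^{-1}\to\mf g^1$ is surjective. Here I would introduce the $\mathfrak{sl}_2$-triple $(E_\theta,\theta^\vee,E_{-\theta})$, after rescaling $E_{-\theta}$ so that $[E_\theta,E_{-\theta}]=\theta^\vee$ (possible since the form pairs the two root vectors nontrivially), using that $[\theta^\vee,E_{\pm\theta}]=\pm2E_{\pm\theta}$ by definition of $\mf g^{\pm2}$. Under this triple $\mf g$ is a finite-dimensional representation whose weight-$i$ space is precisely $\mf g^i$. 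In any such representation the raising operator $\ad E_\theta$ is injective on every weight space of negative weight, since a nonzero element of its kernel would be a highest-weight vector of negative weight, which cannot occur. Applying this at weight $-1$ shows $\ad E_\theta\colon\mf g^{-1}\to\mf g^1$ is injective; as $\dim\mf g^{-1}=\dim\mf g^1$, it is an isomorphism.

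Combining the three computations yields $\ker(\ad E_\theta)^\perp=\mathrm{im}(\ad E_\theta)=\bb C\theta^\vee\oplus\mf g^1\oplus\mf g^2$. The dimension statement then follows from the stated value $\dim\mf g^1=2(h^\vee-2)$, giving $\dim\!\left(\ker(\ad E_\theta)^\perp\right)=1+2(h^\vee-2)+1=2h^\vee-2$.
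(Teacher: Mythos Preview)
Your proof is correct and takes a genuinely different route from the paper. The paper first determines $\ker(\ad E_\theta)$ explicitly, showing it equals $\widetilde{\mf g}^0\oplus\mf g^1\oplus\mf g^2$ (where $\widetilde{\mf g}^0$ is $\mf g^0$ with the line $\bb C\theta^\vee$ removed); it then checks the easy inclusion $\bb C\theta^\vee\oplus\mf g^1\oplus\mf g^2\subset\ker(\ad E_\theta)^\perp$ and closes by a dimension count. You instead use the ad-invariance of the Killing form to identify $\ker(\ad E_\theta)^\perp$ with $\mathrm{im}(\ad E_\theta)$ and compute the image directly through the gradation. Both arguments ultimately rest on the same key fact---that $\ad E_\theta$ is injective on $\mf g^{-1}$---and even prove it in essentially the same way: your $\mathfrak{sl}_2$ highest-weight observation is the representation-theoretic packaging of the paper's Jacobi-identity computation $[E_{-\theta},[x,E_\theta]]=-ix$ for $x\in\mf g^i$, $i=-1,-2$. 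Your approach is a bit more streamlined, since it goes straight for the object of interest without the intermediate description of the kernel; the paper's approach, on the other hand, yields the explicit form of $\ker(\ad E_\theta)$ as a byproduct.
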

\begin{pf}
We first describe the subspace $\ker(\ad E_\theta)$. It is clear that for $x\in\mf{n}_+$ then $x\in \ker(\ad E_\theta)$. For $x\in\mf{h}$, since $x\in\ker(\ad E_\theta)$ if and only if $\langle\theta,x\rangle=0$, and since $\langle\theta,\theta^\vee\rangle=2$, it follows that the Cartan subalgebra decomposes into the (orthogonal) direct sum $\mf{h}=\widetilde{\mf{h}}\oplus\bb{C}\theta^\vee$, where $\widetilde{\mf{h}}=\{x\in\mf{h}\mid \langle x,\theta\rangle=0\}$. If $x\in\mf{g}^0\cap \mf{n}_-$, then we have both $[x,E_\theta]\in\mf{g}^2=\bb{C}E_\theta$ and  $\Ht\left([x,E_\theta]\right)<\Ht(E_\theta)$, which implies $[x,E_\theta]=0$. For $x\in\mf{g}^{i}$, $i=-1,-2$, then $[E_{-\theta},[x,E_\theta]]=-[x,[E_\theta,E_{-\theta}]]-[E_\theta,[E_{-\theta},x]]=[x,\theta^\vee]=-i x$, so that $[x,E_\theta]=0$ if and only if $x=0$. Thus for $i=-1,-2$ we have 
$\mf{g}^i\cap\ker(\ad E_\theta)=\{0\}$, and we proved that
\beq\label{keradetheta}
\ker(\ad E_\theta )=\widetilde{\mf{g}}^0\oplus\mf{g}^1\oplus\mf{g}^2,
\eeq
where $\widetilde{\mf{g}}^0=(\mf{g}^0\cap\mf{n}_-)\oplus\widetilde{\mf{h}}\oplus(\mf{g}^0\cap\mf{n}_+)$. Note in particular that $\mf{g}^0=\widetilde{\mf{g}}^0\oplus \bb{C}\theta^\vee$, and that $\dim\ker(\ad E_\theta )=\dim \mf{g}^0-1+\dim\mf{g}^1+\dim \mf{g}^2=\dim \mf{g}^0+\dim\mf{g}^1$. From \eqref{keradetheta} and the properties of the Killing form it is easy to show that  $\bb{C}\theta^\vee\oplus\mf{g}^1\oplus\mf{g}^2\subset \ker(\ad E_\theta)^\perp$. It remains to prove that the latter inclusion is in fact an equality. Since $\dim\mf{g}=\dim\mf{g}^0+2\dim\mf{g}^1+2$ we have $\dim \left(\ker(\ad E_\theta)^\perp\right)=\dim\mf{g}-\dim \left(\ker(\ad E_\theta)\right)=(\dim\mf{g}^0+2\dim\mf{g}^1+2)-(\dim \mf{g}^0+\dim\mf{g}^1)=\dim\mf{g}^1+2$, from which the thesis follows. Moreover, since $\dim\mf{g}^1=2(h^\vee-2)$, we get $\dim \left(\ker(\ad E_\theta)^\perp\right)=2h^\vee-2.$
\end{pf}

We remark that the characterization of the symplectic leaf $\ker(\ad E_\theta)^\perp$ provided in the above lemma is not necessary for the results of present paper. We believe however that it is a nice piece of information, which easily follows from the results of \cite{MR18} and which to our knowledge has never appeared in the literature before. In Table 1
the dimension of $\ker(\ad E_\theta)^\perp$ is summarized for every finite-dimensional simple Lie algebra.

\begin{table}[H]
\label{tablecoxeter}
\normalsize
\caption{Coxeter numbers, dual Coxeter numbers and the dimension of 
$\ker(\ad E_\theta)^\perp$ for simple Lie algebras.}
\begin{center}
\begin{tabular}{|c||c|c|c|c|c|c|c|c|c|} 
\hline
$\mf{g}$ & $A_n$ & $B_n$  & $C_n$ & $D_n$ & $E_6$ & $E_7$ & $E_8$ & $F_4$ & $G_2$
\\
\hline
$h$ &  $n+1$ & $2n$ & $2n$ & $2n-2$ & $12$ & $18$ & $30$ & $12$ & $6$
\\
\hline
$h^\vee$ &  $n+1$ & $2n-1$ & $n+1$ & $2n-2$ & $12$ & $18$ & $30$ & $9$ & $4$
\\
\hline
$\dim\left(\ker(\ad E_\theta)^\perp\right)$ &  $2n$ & $4n-4$ & $2n$ & $4n-6$ & $22$ & $34$ & $58$ & $16$ & $6$
\\
\hline
\end{tabular}
\end{center}
\end{table}

%%%%%%%%%%%%%%%%%%%%%%%%%%%

\subsection*{A.2 The intersection of the kernels in the DS case}

Let us denote by $h$ the Coxeter number of $\mf g$ and by $N$ its dimension. Then $N=n(h+1)$. Let $\left\{X_i,Y_i,H_i\right\}_{i=1,\dots,n}$ 
be Chevalley generators of $\mf{g}$, satisfying the relations
$$[H_i,H_j]=0,\quad [H_i,X_j]=C_{ij}X_j,\quad [H_i,Y_j]=-C_{ij}Y_j,\quad [X_i,Y_j]=\delta_{ij}H_i,$$
where $C=(C_{ij})_{i,j=1,\dots,n}$ is the Cartan matrix of $\mf g$.  Define the principal gradation of $\mf g$ by setting $\deg X_i=-\deg Y_i=1$, for every $i=1,\dots,n$, and extending it uniquely to a gradation of the algebra by setting $\deg [x,y]=\deg x+\deg y$. Then we have
$$\mf{g}=\bigoplus_{i=1-h}^{h-1}\mf{g}_i,\qquad \text{where}\qquad \mf{g}_i=\left\{x\in\mf{g}\mid\deg x=i\right\}.$$
Note in particular that for every $i=1,\dots n$  we have $X_i\in\mf{g}_1$, $Y_i\in \mf{g}_{-1}$ and that $\dim \mf{g}_{h-1}=1$. 
The principal gradation is induced by the adjoint action of the element
$$\rho^\vee=\sum_{j,k=1}^n\left(C^{-1}\right)_{kj} H_j\in\mf{h},$$
so that we can equivalently write $\mf{g}_i=\left\{x\in\mf{g}\mid\ad_{\rho^\vee} x=ix\right\}$.

Denote $I=\sum_{i=1}^n Y_i$ the principal nilpotent element and let $A=E_\theta\in \mf{g}_{h-1}$. 
The other notations are as in Sections \ref{sec:DS-structures} and \ref{sec:centralinvariants}. 
We will show that the intersection between $(\ker P_1)_w$ and $(\ker P_2)_w$ is trivial for all $w\in\CS$ (more generally, for all 
$w\in I+\bigoplus_{i=0}^{h-1}\mf{g}_i$). Indeed, suppose that 
$v\in(\ker P_1)_w\cap(\ker P_2)_w$ and decompose it with respect to the principal gradation: $v=\sum_{i=1-h}^{h-1}v_i$. Since $v\in(\ker P_1)_w$, every $v_i$ commutes with $A$ and therefore $v_{1-h}=0$. Decomposing also $w=I+\sum_{j=0}^{h-1}w_j$, imposing that $v\in(\ker P_2)_w$, and considering the 
minimal degree element, we obtain that $[v_{2-h},I]=0$. It follows that $v_{2-h}$ commutes both with $A$ and $I$, so that $v_{2-h}=0$ (see~\cite{CP92}). In the same way, one proves that $v_i=0$ for all $i=3-h,\dots,h-1$. 

%%%%%%%%%%%%%%%%%%%%%%%%%%%

\subsection*{A.3 The $\lambda$-degree of the characteristic polynomial}

This part of the appendix is devoted to the proof that the degree in $\lambda$ of the characteristic polynomial ${\mathcal R}_\CM(p,\lambda;{
w})$ is equal to the rank $n$ of $\mathfrak g$, if $w$ belongs to the symplectic leaf $\CS$. First of all, we notice that 
$$
{\mathcal R}_\CM(p,\lambda;{
w})=\det\left(g^{ij}\right)\det\left(-p\, \mbox{Id}- \ad(w-\lambda A)\right),
$$
so that we just need to compute the degree of $\det\left(p\, \mbox{Id}+ \ad(w-\lambda A)\right)$ in the adjoint representation.

Denote $I=\sum_{i=1}^n Y_i$ the principal nilpotent element, let $E_\theta\in \mf{g}_{h-1}$ and denote  $\Lambda_\lambda=I-\lambda E_\theta=I-\lambda A$. 
We first compute the degree in $\lambda$ of the characteristic polynomial 
$$C_{\lambda}(p)=\det\left(p\, \mbox{Id}+ \ad\Lambda_\lambda\right).$$ 
It is known \cite{Ko78} that the element $\Lambda_1=I- E_\theta$ is a regular semisimple element of $\mf{g}$. The following easy lemma is a special case of \cite[Lemma 4.1]{MRV16}:
\begin{lem}\label{lemmaA3}
If $p$ is an eigenvalue for $\Lambda_1$, then $\lambda^{\frac{1}{h}}p$ is an eigenvalue for $\Lambda_\lambda.$
\end{lem}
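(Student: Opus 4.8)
The plan is to exploit the homogeneity of the element $\Lambda_\lambda = I - \lambda A$ under the principal gradation. The key observation is that $I \in \mf{g}_{-1}$ has principal degree $-1$, while $A = E_\theta \in \mf{g}_{h-1}$ has principal degree $h-1$. So if I conjugate by the one-parameter group generated by $\rho^\vee$, I can rescale the two homogeneous pieces of $\Lambda_\lambda$ by different powers of a scaling parameter, and the idea is to choose the scaling so as to absorb $\lambda$.

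First I would introduce the scaling operator $t^{\ad_{\rho^\vee}}$, which acts on the homogeneous component $\mf{g}_i$ as multiplication by $t^i$. Then $t^{\ad_{\rho^\vee}} I = t^{-1} I$ and $t^{\ad_{\rho^\vee}} A = t^{h-1} A$. The plan is to set $t = \lambda^{1/h}$, so that
\begin{equation}
t^{\ad_{\rho^\vee}}\Lambda_1 = t^{\ad_{\rho^\vee}}(I - A) = t^{-1} I - t^{h-1} A = \lambda^{-1/h}\bigl(I - \lambda A\bigr) = \lambda^{-1/h}\,\Lambda_\lambda.
\end{equation}
Since $t^{\ad_{\rho^\vee}} = \exp\bigl((\log t)\,\ad_{\rho^\vee}\bigr)$ is an inner automorphism of $\mf{g}$, it commutes with the bracket, and hence intertwines the adjoint actions: $\ad\bigl(t^{\ad_{\rho^\vee}}\Lambda_1\bigr)$ is conjugate (by the corresponding operator on $\mf{g}$) to $\ad\Lambda_1$, so the two have the same spectrum. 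Therefore $\ad\bigl(\lambda^{-1/h}\Lambda_\lambda\bigr)$ has the same eigenvalues as $\ad\Lambda_1$. Equivalently, $\lambda^{-1/h}\,\ad\Lambda_\lambda$ is conjugate to $\ad\Lambda_1$, so that $p$ being an eigenvalue of $\ad\Lambda_1$ forces $\lambda^{1/h} p$ to be an eigenvalue of $\ad\Lambda_\lambda$, which is exactly the claim.

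The main step to handle carefully is the bookkeeping of where $\lambda^{1/h}$ enters, and in particular keeping straight that one should conjugate $\Lambda_1$ to recover a rescaled $\Lambda_\lambda$ rather than the reverse. Writing $\Phi_t = t^{\ad_{\rho^\vee}}$ as the automorphism of $\mf{g}$ induced by $\rho^\vee$, the relation $\Phi_t \Lambda_1 = \lambda^{-1/h}\Lambda_\lambda$ with $t = \lambda^{1/h}$ gives, upon applying $\ad$ and using $\ad(\Phi_t x) = \Phi_t\,\ad(x)\,\Phi_t^{-1}$, that $\ad\Lambda_\lambda = \lambda^{1/h}\,\Phi_t\,\ad\Lambda_1\,\Phi_t^{-1}$; from this the eigenvalue statement is immediate. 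I expect the only genuine subtlety to be the (standard) fact that $\Phi_t$ is a well-defined Lie algebra automorphism, which follows since $\ad_{\rho^\vee}$ is a derivation and its exponential is therefore an automorphism, and that $\rho^\vee$ indeed induces the principal gradation so that $I$ and $A = E_\theta$ are eigenvectors with the stated degrees $-1$ and $h-1$. Both of these are recorded in the setup of Appendix A.2, so no new input is required.
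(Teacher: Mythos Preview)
Your proof is correct and is essentially the same as the paper's: both compute $\lambda^{\frac{1}{h}\ad_{\rho^\vee}}\Lambda_1 = \lambda^{-1/h}\Lambda_\lambda$ using that $I\in\mf{g}_{-1}$ and $A=E_\theta\in\mf{g}_{h-1}$, and conclude the eigenvalue statement from this conjugation. The paper simply writes the one-line computation and says ``proving the lemma'', whereas you spell out the passage to $\ad$ and the conjugation argument explicitly; the underlying argument is identical.
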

\begin{proof}
Since $
I\in\mf{g}_{-1}$ and $E_\theta\in\mf{g}_{h-1}$ we have
$$\lambda^{\frac{1}{h}ad\rho^\vee}\Lambda_1=\lambda^{\frac{1}{h}ad\rho^\vee}
I-\lambda^{\frac{1}{h}ad\rho^\vee}E_\theta=\lambda^{-\frac{1}{h}}
I-\lambda^{\frac{h-1}{h}}E_{\theta}=\lambda^{-\frac{1}{h}}\Lambda_\lambda,$$
proving the lemma.
\end{proof}
Now   $\Lambda_1$ is semisimple and therefore diagonalizable, and its characteristic polynomial 
is of the form
$$C_1(p)=p^n\prod_{i=1}^{hn}(p-p_i),$$
for certain $p_i\in\bb{C}\setminus\{0\}$, $i=1,\dots, hn$.  In particular, the contribution $p^n$ in the polynomial above is due to the fact that $\Lambda_1$ is regular. 
Due to the previous lemma, the characteristic polynomial $C_\lambda(p)$ of $\Lambda_\lambda$   
is given by
$$C_\lambda(p)=p^n\prod_{i=1}^{hn}(p-\lambda^{\frac{1}{h}}p_i),$$
which is a polynomial of degree $\frac{nh}{h}=n$ with respect to $\lambda$. 

Now consider the general case, i.e., the element $\Lambda_\lambda+ w_+=I+w_+-\lambda A=w-\lambda A$, where $w\in\CS$ and 
$$w_+=\sum_{i=0}^{h-1} w_i , \qquad w_i\in \mf{g}_{i}
.$$
Then we obtain
\begin{align}
\lambda^{-\frac{1}{h}\ad \rho^\vee}(\Lambda_\lambda+ w_+)&=\lambda^{\frac{1}{h}}I-\lambda \lambda^{\frac{1-h}{h}}E_{\theta}+
\sum_{i=0}^{h-1} \lambda^{-\frac{i}{h}}w_i\notag \\
&=\lambda^{\frac{1}{h}}(\Lambda_1+\lambda^{-\frac{1}{h}}M(\lambda;w_+)),
\end{align}
where for fixed $w_+$ the function $M(\lambda;w_+)$ is a polynomial in $\lambda^{-\frac{1}{h}}$. It then follows that --- for a fixed value of $\lambda$ --- $p(\lambda)$ is an eigenvalue of $\Lambda_\lambda+w_+$ if and only if $\lambda^{\frac{1}{h}}p(\lambda)$ is an eigenvalue for $\Lambda_1+\lambda^{-\frac{1}{h}}M(\lambda;w_+)$. From \cite[Theorem $6.3.12$]{horn12} we obtain that for any simple eigenvalue $p$ of $\Lambda_1$, then there exists a unique eigenvalue $p(\lambda)$ of $\Lambda_1+\lambda^{-\frac{1}{h}} M(\lambda;w_+)$, admitting the expansion
$$p(\lambda)=p+O(\lambda^{-\frac{2}{h}})\qquad \text{ as } \lambda\to \infty.$$
Therefore $\lambda^{\frac{1}{h}}p(\lambda)$ is an eigenvalue of $\Lambda_\lambda+w_+$, with the asymptotic behaviour
$$\lambda^{\frac{1}{h}}p(\lambda)=\lambda^{\frac{1}{h}}p+O(\lambda^{-\frac{1}{h}})\qquad \text{ as } \lambda\to \infty.$$
Since  $\Lambda_\lambda+w_+\subset I+\mf{b}$, then $\Lambda_\lambda+w_+$ is a regular element \cite{Ko78}. 
Then, its characteristic polynomial in the adjoint representation is given by
\beq\label{charLaQ}
\det(p\, \mbox{Id}+\Lambda_\lambda+w_+)=
p^n\sum_{i=1}^{nh}(p-\lambda^{\frac{1}{h}}p_i(\lambda)),
\eeq
where $p_i(\lambda)=p_i+O(\lambda^{-\frac{2}{h}})$ and the $p_i$ are simple eigenvalues of $\Lambda_1$. 
From the representation \eqref{charLaQ}, it follows that it is a polynomial in $\lambda$ of degree $n$.

\subsection*{Acknowledgments} 
A.R. and M.P. thank  the {Dipartimento di Matematica e Applicazioni\/}  of Universit\`a Milano-Bicocca  for its hospitality.
This work was carried out under the auspices of the GNFM Section of INdAM. We would like to thank the anonymous referee 
whose attentive reading of the manuscript helped to improve the presentation.

\end{document}